\def\@ACM@checkaffil{
    \if@ACM@instpresent\else
    \ClassWarningNoLine{\@classname}{No institution present for an affiliation}%
    \fi
    \if@ACM@citypresent\else
    \ClassWarningNoLine{\@classname}{No city present for an affiliation}%
    \fi
    \if@ACM@countrypresent\else
        \ClassWarningNoLine{\@classname}{No country present for an affiliation}%
    \fi
}
  \providecommand\BibTeX{{%
    \normalfont B\kern-0.5em{\scshape i\kern-0.25em b}\kern-0.8em\TeX}}}
\newcommand{\B}[1]{\bm{#1}}
\definecolor{darkBlue}{rgb}{0,0,0.75}
\definecolor{darkred}{rgb}{0,0.5,0}
\newtheorem{theorem}{Theorem}[section]
\newtheorem{proposition}[theorem]{Proposition}
\newcommand{\mina}[1]{#1}
\newcommand{\caleb}[1]{\textcolor{blue}{#1}}
\newcommand{\serif}[1]{\textcolor{red}{#1}}
\newcommand{\matching}{\textit{match}\xspace}
\newcommand{\bimatching}{\textit{bmatch}\xspace}
\newcommand{\vcover}{\textit{vcover}\xspace}
\newcommand{\domset}{\textit{dom-set}\xspace}
\newcommand{\densest}{\textit{dense-sub}\xspace}
\newcommand{\opttrans}{\textit{opt-trans}\xspace}
\newcommand{\cplex}{\textit{CPLEX}\xspace}
\newcommand{\gurobi}{\textit{Gurobi}\xspace}
\newcommand{\mwuomp}{\textit{MWU-opt}\xspace}
\newcommand{\mwupetsc}{\textit{MWU-PETSc}\xspace}
\newcommand{\mwuhybrid}{\textit{MWU-opt}\xspace}
\newcommand{\graft}{\textit{ms-bfs-graft}\xspace}
\newcommand{\gbbs}{\textit{GBBS}\xspace}
\newcommand{\mtxname}[1]{\textit{#1}\xspace}
\begin{document}

\title{Efficient Parallel Implementation of the Multiplicative Weight Update Method for Graph-based Linear Programs}

\author{Caleb Ju}
\email{cju33@illinois.edu}
\affiliation{%
  \institution{Georgia Institute of Technology}
}

\author{Serif Yesil}
\affiliation{%
  \institution{University of Illinois at Urbana-Champaign}
  \city{Atlanta}
  \state{GA}
}

\author{Mengyuan Sun}
\email{ms65@illinois.edu}
\affiliation{%
  \institution{University of Illinois at Urbana-Champaign}
}

\author{Chandra Chekuri}
\email{chekuri@illinois.edu}
\affiliation{%
  \institution{University of Illinois at Urbana-Champaign}
}

\author{Edgar Solomonik}
\email{solomon2@illinois.edu}
\affiliation{%
  \institution{University of Illinois at Urbana-Champaign}
}

\renewcommand{\shortauthors}{Ju et. al.}

\renewcommand{\sout}[1]{}
\begin{abstract}
    Positive linear programs (LPs) model many graph and operations research problems. One can solve for a $(1${+}$\epsilon)$-approximation for positive LPs, for any selected $\epsilon$, in polylogarithmic depth and near-linear work via variations of the multiplicative weight update (MWU) method. Despite extensive theoretical work on these algorithms through the decades, their empirical performance is not well understood.

    In this work, we implement and test an efficient parallel algorithm for solving positive LP relaxations, and apply it to graph problems such as densest subgraph, bipartite matching, vertex cover and dominating set. We accelerate the algorithm via a new step size search heuristic. Our implementation uses sparse linear algebra optimization techniques such as fusion of vector operations and use of sparse format. Furthermore, we devise an implicit representation for graph incidence constraints. We demonstrate the parallel scalability with the use of threading OpenMP and MPI on the Stampede2 supercomputer. We compare this implementation with exact libraries and specialized libraries for the above problems in order to evaluate MWU's practical standing for both accuracy and performance among other methods. Our results show this implementation is faster than general purpose LP solvers (IBM CPLEX, Gurobi) in all of our experiments, and in some instances, outperforms state-of-the-art specialized parallel graph algorithms.
\end{abstract}

\maketitle

\section{Introduction}
\sout{Large-scale graph processing is invaluable in many applications in areas like network science, machine learning, and neuroscience ~\cite{wang2014,brin1998,kleinberg1999,bassett2017network,pfaff2020learning,alguliyev2021graph}.}
Designing general graph libraries for massively-parallel machines is challenging as combinatorial graph algorithms often have limited concurrency and arithmetic intensity~\cite{lumsdaine2007challenges}.
Many have parallel depth at least proportional to graph diameter, though in practice these algorithms contain sufficient concurrency to achieve high-efficiency on shared-memory machines~\cite{galois,shun2013ligra,shun2015smaller,dhulipala2019low}.
Designing efficient distributed-memory parallelizations is more challenging, though it has been achieved with use of graph partitioning~\cite{sui2010parallel}, formulations via sparse matrix products~\cite{bulucc2011combinatorial,bulucc2009parallel,solomonik2017scaling}.

\mina{Another method for solving graph problems is to reformulate or relax \cite{williamson2011design,vazirani2013approximation} the graph problem into a linear program (LP).}
Recent theoretical breakthroughs
in both sequential and parallel algorithms for several graph problems~\cite{andoni2020parallel,li2020faster, van2020bipartite,van2020solving,sherman2013nearly,liu2020faster}
indicate that carefully designed LP solvers can enable algorithms with tunable accuracy that have lower cost and depth than their combinatorial counterparts.
Given a matrix $\B{A} \in \mathbf{R}^{m \times n}$, vector $\B{c} \in \mathbf{R}^n$, and vector $\B b \in \mathbf{R}^m$, an LP 
is the optimization problem,
\begin{equation} \label{eq:LPDef} 
    \max_{\B x \in \mathbf{R}^n} \langle \B c, \B x \rangle \text{ s.t. }
    \B{Ax} \leq \B b,
\end{equation}
where $\langle \B c, \B x \rangle = c_1 x_1 + c_2x_2 + \ldots  + c_nx_n$,
and $\B u \leq \B v$ means $u_i \leq v_i \ \forall i$.
\sout{LP relaxations also are a standard and powerful tool in the development of approximation algorithms and heuristics
for problems that do not have efficient exact algorithms} \cite{williamson2011design,vazirani2013approximation}.

Typically, to solve a general LP, one employs variants of the simplex or
interior point methods.
While these algorithms are efficient in the sequential setting, they often have limited parallelism.
LP solving is P-Complete, which implies
that a poly-logarithmic depth parallel algorithm is believed unlikely to
exist~\cite{karp1989survey}.

However, many graph problems can be solved via LPs that contain only positive entries, and
these class of LPs admit efficient parallelizable solvers if some approximation is allowed.
\sout{ We consider approximation
algorithms to the standard feasibility mixed packing and covering LP~\cite{young2001sequential, mahoney2016approximating},
\begin{equation} \label{eq:NormFeasibleLP} 
    \exists \ {\B x \in \mathbf{R}^n} 
    \text{ s.t. } \B{Px} \leq \mathbb{1}, \B{Cx} \geq \mathbb{1}, \B x \geq \mathbb{0},
\end{equation}
where $\B P$ and $\B C$ are \emph{nonnegative}. Vectors $\mathbb{1}$ and $\mathbb{0}$ are the all ones and zeros vector, respectively.
The algorithms we consider seek a $(1+\epsilon)$-relative approximation: that is, a solution $\B x$ such 
that $\B{Px} \leq (1+\epsilon)\mathbb 1$ and $\B{Cx} \geq \mathbb{1}$.
Two important special cases are pure packing and pure covering LPs that we describe
formally later.}
\sout{There is extensive work on approximating positive LPs in the sequential and parallel setting; we
refer the reader to \cite{arora2012multiplicative,quanrud2019thesis,wang2019thesis}.
The first efficient parallel approximation algorithms for positive LPs was developed by Luby and Nisan~\cite{luby1993parallel} which was refined later by work of Young \cite{young2001sequential}.}
These approaches yield a parallel algorithm with depth that is a polynomial in $1/\epsilon$ and $\log n$ (the number of variables), and is independent of the structure of the constraint matrices $\B P$ and $\B C$. \mina{ It outputs a result that is a $(1+\epsilon)$-approximation, which, for a maximization problem entails a relative error of $(1-\epsilon)$ in the objective value, and for a minimization problem, a relative error of $(1+\epsilon)$.}

The polynomial scaling with $1/\epsilon$ has been improved in recent work. Our implementation is based on an accelerated version of a more recent algorithm with an improved depth of $\tilde{O}(\epsilon^{-3})$ for mixed problems and $\tilde{O}(\epsilon^{-2})$ for pure covering or pure packing problems. \mina{The only other previously published study ~\cite{makari2013distributed} of a distributed and parallel LP solver built using this approach implemented and compared run times of two earlier algorithms: MPCSolver and Young's algorithm, with $\tilde{O}(\epsilon^{-5})$ and $\tilde{O}(\epsilon^{-4})$ depth, respectively.} Further details on theoretical developments in approximately solving positive LPs in parallel are provided in Section~\ref{sec:MWUDef}.

The algorithm we focus on is from Mahoney et. al.~\cite{mahoney2016approximating}, which offers the fastest theoretical performance for pure packing, pure covering, and mixed packing and covering LPs in a parallel setting. We describe these problems and their associated graph problems formally in Section \ref{sec:MWUProbs}. This algorithm we call MWU since it utilizes the weights vector from the more general MWU method~\cite{arora2012multiplicative}. Despite the algorithm's strong theoretical foundation, little is known about its empirical performance, especially for solving real-world graph problems over large datasets. Furthermore, understanding how this algorithm compares to other LP solvers and specialized graph libraries is an open question.

\mina{In this paper, we create a practical and scalable implementation of a parallel (1+$\epsilon$)-approximate positive LP solver by adapting the MWU algorithm with the current fastest theoretical performance, and we also provide a comprehensive empirical study comparing MWU against exact solvers, specialized libraries, and previous related work. We also test a line search method that finds the largest step-size permitted without violating theoretical guarantees.
This step-size search reduces in the number of overall MWU iterations in practice by multiple orders of magnitude.
Our results suggest that MWU can be fast in both theory and practice, with run times comparable to even specialized libraries but with the added benefit of, one, being generalizable and, two, more versatile, as it can benefit from high-performance sparse linear algebra libraries. We believe this marks MWU as a viable alternative for solving graph problems approximately on a large dataset.} \sout{In this paper, we implement a practical, parallel, approximate positive LP solver, and we conduct extensive numerical experiments on a supercomputer. Our results suggest that MWU can be fast in both theory and practice.
} Overall, our paper makes the following contributions:

\begin{itemize}[leftmargin=*]
    \item the first shared and distributed-memory implementation of a general-purpose approximate solver based on Mahoney et. al.~\cite{mahoney2016approximating} for positive LPs, called MWU
    \item a step size search method for MWU that empirically reduces the iteration count by up to three orders of magnitude without adding significant overhead
    \item an efficient implementation of MWU using an implicit representation of common constraint matrices observed in graph LPs as well as other standard techniques for SLA. We show these optimizations provide good scalability and lead up to 5.2x speedup relative to MWU implemented with PETSc (a library for parallel sparse linear algebra).
    \item the first comparison of an approximate positive LP solver against general LP solvers as well as specialized parallel graph algorithms. In particular, MWU finds $(1 + \epsilon)$-relative ($\epsilon =0.1$) solutions up to 3-2800x and 5-1180x faster than CPLEX and Gurobi (which find exact solutions for implicit ILPs and exact fractional solutions for relaxed LPs), respectively, for solving several graph problems on large real-world graphs on the Stampede2 supercomputer using KNL compute nodes.
\end{itemize}

\section{Background on Linear Program Solvers}
\label{sec:MWUDef}
\subsection{Positive LP Solvers} \label{sec:pos_lp_solvers}
Fast approximate solvers for positive LPs in the sequential settings have been developed since the early 1990's \cite{plotkin1995fast,GrigoriadisK94}, and there is extensive and continued attention to this line of work. We mostly focus on parallel algorithms and refer the reader to \cite{quanrud2019thesis,wang2019thesis} for extensive pointers.
Luby and Nisan provide the first 
parallel algorithm for explicit positive LPs which obtain a $(1+\epsilon)$-approximation in  $\tilde{O}(\epsilon^{-4})$ 
iterations\footnote{We write $\tilde{O}(f(n))$ to be proportion to $f(n)$ and a polylogarithmic of $f(n)$, i.e., $\tilde{O}(f(n)) \propto O(f(n)\log^{O(1)}(f(n)))$} 
for pure
packing and covering LPs~\cite{luby1993parallel}. Young clarified and extended
this work to solve the more general mixed packing and covering LPs in
parallel in $\tilde{O}(\epsilon^{-4})$ iterations~\cite{young2001sequential}. 
The dependence on $\epsilon$ has remained unchanged for over 10 years until the work of 
Allen-Zhu and 
Orecchia, who 
solve pure packing and pure covering LPs in parallel with 
$\tilde{O}(\epsilon^{-3})$ iterations~\cite{allenzhu2016using}.
They also obtained better dependence on $\epsilon$ in the sequential setting~\cite{allen2019nearly} for pure packing and covering LPs (see also~\cite{mahoney2016approximating}).

Recently, Mahoney et. al.~\cite{mahoney2016approximating} utilized ideas of Young~\cite{young2014nearly} on faster near-linear time sequential solver to develop new parallel algorithms for positive LPs. 
For mixed packing and covering LPs, their algorithm converges
in $ O\big(\log(m_p+m_c)\log(n/\epsilon)/\epsilon^3\big)$ iterations and is also work-efficient, i.e., 
work is near linear to the number of nonzeros in the LP. For pure packing 
and pure covering LPs, the dependence on $\epsilon$ is $\epsilon^{-2}$ instead.
When $\epsilon$ is not too small, these algorithms have low depth in the PRAM model.

Empirical studies of these fast approximate algorithms have been mainly limited to relatively small problems.
Koufogiannakis and Young adapt a sequential mixed packing and covering LP
and show it outperforms simplex on randomly generated binary matrices with 
dimension up to 2222~\cite{koufogiannakis2014nearly}. Allen-Zhu and Orecchia 
compare their $\tilde{O}(\epsilon^{-1})$-dependent sequential algorithm~\cite{allen2014using} to the algorithms of Luby and Nisan~\cite{luby1993parallel} and Awerbuch and Khandekar~\cite{awerbuch2009stateless}
for solving pure packing LPs with a randomly generated matrix of size $60 \times 40$. Jelic et.\ al.\ implement a parallel primal-dual method on GPUs to solve positive LPs, although their constraint matrices are randomly generated binary matrices with dimensions up to 25000~\cite{jelic2015fast}. The most closely related work to ours is that of Makari et.\ al.\ \cite{makari2013distributed}, who implement a gradient descent algorithm to solve generalized matching on large real-world and synthetic graphs.
Their implementation, based on the algorithm from \cite{awerbuch2009stateless}, has a $\tilde{O}(1/\epsilon^5)$ number of iterations, and they confine their attention to a single graph problem. 

In this paper we focus on only obtaining fractional solutions to the LP
problems. Since all LPs are polynomial time solvable, whereas
the discrete formulation of some of the graph problems we consider are NP-hard,
this allows us to have a more uniform and fair comparison to prior art
(e.g.,~\cite{makari2013distributed}, as well as state-of-the-art software for
solving LPs like Gurobi).

There also exists rounding techniques (which convert a fractional solution to an
integral one) or specialized algorithms to solve or approximately solve the
discrete problems, but these methods are specific to the problem and have
different levels of parallelism, efficiency, and approximation guarantees. For
example, an exact parallel rounding technique for a maximum matching problem is
implemented in~\cite{makari2013distributed}, but the run time can be three
orders of magnitude longer than solving the LP problem. On the other hand,
rounding techniques for dominating set are compared in~\cite{li2020performance},
but these only return approximately optimal solutions and the rounding is not
parallelized. There are also specialized libraries for the aforementioned graph
problems~\cite{azad2016computing,dhulipala2020graph} (more details can be found
in Section~\ref{sec:exp_setup}). But again, these implementations are tailored
to the problem at hand, hence it would not be fair to compare the run time
against our general-purpose solver in terms of obtaining an integral solution.

\subsection{The MWU Algorithm} \label{sec:mwu_alg_sec}
We now introduce MWU, the algorithm of Mahoney et. al.~\cite{mahoney2016approximating} 
for approximately solving the standard mixed packing and covering LP in parallel. This section does not contain our modifications for improving the empirical performance, which are found in Section ~\ref{sec:AlgImprove}.

First, we start describing how MWU solves the mixed packing and covering feasibility LP ~\cite{young2001sequential, mahoney2016approximating}. We will discuss how to modify this into an optimization problem later in the section. The feasibility LP is:
\begin{equation} \label{eq:NormFeasibleLP} 
    \exists \ {\B x \in \mathbf{R}^n} 
    \text{ s.t. } \B{Px} \leq \mathbb{1}, \B{Cx} \geq \mathbb{1}, \B x \geq \mathbb{0},
\end{equation}
where $\B P$ and $\B C$ are \emph{nonnegative}. Vectors $\mathbb{1}$ and $\mathbb{0}$ are the all ones and zeros vector, respectively.
The algorithms we consider seek a $(1+\epsilon)$-relative approximation: that is, a solution $\B x$ such 
that $\B{Px} \leq (1+\epsilon)\mathbb 1$ and $\B{Cx} \geq \mathbb{1}$.

MWU (Algorithm~\ref{alg:exp1}) ensures that both
packing and convering constraints, $\B{Px} \leq \mathbb{1}$ and $\B{Cx} \geq \mathbb{1}$, are approximately satisfied
by approximating $\max(\B{Px})$ and $\min(\B{Cx})$ with smoothed maximum and minimum functions,
\begin{align*}
  \mathrm{smax}_\eta(\B x) &= \frac{1}{\eta} 
 \log\Big( 
    \sum\limits_{i=1}^n
    \mathrm{exp}(\eta \cdot x_i)
 \Big), \\
 \mathrm{smin}_\eta(\B x) &= -\frac{1}{\eta} 
 \log\Big( 
    \sum\limits_{i=1}^n
    \mathrm{exp}(-\eta \cdot x_i)
 \Big),
\end{align*}
where $\eta>2$ is a smoothing parameter. The MWU algorithm and step size search make use of their gradients,
\begin{align*}
 \nabla \mathrm{smax}_\eta(\B x) &= 
    \frac{
    \mathrm{exp}(\eta \cdot \B x)
    }{
    \langle \mathbb{1}, \mathrm{exp}(\eta \cdot \B x) \rangle 
    }, 
 \nabla \mathrm{smin}_\eta(\B x) = 
    \frac{
    \mathrm{exp}(-\eta \cdot \B x)
    }{
    \langle \mathbb{1}, \mathrm{exp}(-\eta \cdot \B x) \rangle 
    }.
\end{align*}
For more details on these functions, see 
Chapter 2 of~\cite{quanrud2019thesis}.

\begin{algorithm}[h]
\caption{Multi-Update MWU Method for Mixed Packing and Covering LPs}
\footnotesize
\begin{algorithmic}[1]
\Procedure{MWU}{$\B{P} \in \mathbf{R}_+^{m_P \times n}$, $\B{C} \in \mathbf{R}_+^{m_C \times n}$, $\epsilon$ }
  \State{$\eta \gets 10 \log(m)/\epsilon$ where $m: = m_P + m_C$}
  \State{$x_i \gets \frac{\epsilon}{n \| \B{P}_{:,i} \|_\infty } \ \forall i \in [n]$} \label{line:init_vec} \Comment{$\| \B x\|_\infty = \max_i |x_i|$}
  \While{constraints not approximately satisfied and $\B C \ne \varnothing$} \label{line:alg1_stop_criteria}
    \State {$\B g \gets \B{P}^\mathsf T \nabla \mathrm{smax}_\eta(\B{Px})$} \label{line:smax_alg1}
    \State {$\B h \gets \B{C}^\mathsf T \nabla \mathrm{smin}_\eta(\B{Cx})$} \label{line:smin_alg1}
    \State $d_i \gets \frac{1}{2\eta}\mathrm{max} \{0,1-\frac{g_i}{h_i}\} \cdot x_i \ \forall i$ \label{line:direction_alg1}
    \If{$\max(\B{d}) = 0$} \label{line:get_maxd_alg1}
        \State{Return ``INFEASIBLE''} \label{line:ret_inf}
    \EndIf
    \State{$\B x \gets \B x + \B d$ } \label{line:vec_add}
    \State{$\B C \gets \{ c_i \ : \ c_i^Tx < 1 \}$} \label{line:drop_cons} \Comment{Keep unsatisfied constraints}
  \EndWhile
  \State{\Return $x$}
\EndProcedure
\end{algorithmic}
\label{alg:exp1}
\end{algorithm}

The algorithm initializes the vector $\B x$ with small values so that each starting packing constraint is at most $\epsilon$ (Line~\ref{line:init_vec}). The smoothing parameter $\eta$ is set so that both $\mathrm{smax}_\eta$ and $\mathrm{smin}_\eta$ are within an $\epsilon$ additive error of max and min, respectively. In each MWU iteration, the algorithm multiplicatively updates $\B x$. This is done by defining a step or update vector, $\B d$, where $d_i$ is a multiple of $x_i$ (Line~\ref{line:direction_alg1}), and adding $\B d$ to $\B x$ (Line~\ref{line:vec_add}). Vectors $\B g$ and $\B h$, which are gradients of the smoothed max packing and min covering constraints, respectively, are also utilized to define $\B d$ (Lines~\ref{line:smax_alg1},~\ref{line:smin_alg1}). In particular, $\B d$ is an approximate solution to the Lagrangian relaxation,
\begin{equation} \label{eq:LagrangianRelax} 
\begin{split}
    \exists \B d \in \mathbf{R}^n_{\geq 0}
    \text{ s.t. } &\langle \B w_p, \B{P d} \rangle = \langle \B{P}^T\B w_p, \B d \rangle \leq 1, \\
    & \langle \B w_c, \B{C d} \rangle  = \langle \B{C}^T\B w_c, \B d \rangle \geq 1,
\end{split}
\end{equation}
where $\B w_p = \nabla \mathrm{smax}_\eta(\B Px)$ and $\B w_c =  \nabla \mathrm{smin}_\eta(\B C x)$.

If the positive LP is infeasible, then there exists some MWU iterations where $\B d=\mathbb 0$, in which case the algorithm reports the LP is infeasible (Line~\ref{line:get_maxd_alg1})~\cite{mahoney2016approximating}. Assuming otherwise, the theoretical analysis guarantees MWU will return an $(1+\epsilon)$-relative solution. Throughout the algorithm, we drop satisfied covering constraints, as these can unnecessarily slow down progress (Line~\ref{line:drop_cons}). 
Finally, the algorithm returns $\B x$ when all the covering constraints are satisfied or when there exists no covering constraints.

We note that Algorithm~\ref{alg:exp1} can also solve pure packing or pure covering LPs, which are, respectively,
\begin{align*}
    &\max \ \langle \mathbb{1}, \B x \rangle \text{ s.t. } \B{Px} \leq \mathbb{1}, \B{x} \geq \mathbb{0}, \B{x} \in \mathbf{R}^n \\
    &\min \ \langle \mathbb{1}, \B x \rangle \text{ s.t. } \B{Cx} \geq \mathbb{1}, \B{x} \geq \mathbb{0}, \B{x} \in \mathbf{R}^n.
\end{align*}
For example, to solve a pure packing LP, we embed the objective function as the added constraint, $\frac{1}{M} \mathbb{1}^T\B x \geq 1$, where $M$ is the estimate of the maximum value, e.g., $M=\sum\limits_{i=1}^n \max_{j : p_{ji} > 0} 1/p_{ji}$. Then we do binary search over $M$, using Algorithm~\ref{alg:exp1} to determine if the resulting mixed packing and covering LP is a feasible. Since there is one covering constraint, then $\mathrm{smin}_\eta(\B{Cx}) = \min(\B{Cx})$. This exact approximation permits one to scale the step direction (Line~\ref{line:direction_alg1}) by a factor of $2$ in the theoretical analysis, which improves the number of iterations by a factor of $\epsilon$~\cite{mahoney2016approximating}. Also, noting $\B{C} = \frac{1}{M} \mathbb{1}^T$ and $\nabla \mathrm{smin}_\eta(\B{Cx})=1$, we have $\B{h}=\frac{1}{M} \mathbb{1}$ (Line~\ref{line:smin_alg1}), so we do not need to explicitly compute $\B h$. Solving a pure covering LP is done via a similar transformation. Solving a mixed covering and packing optimization problem, likewise, involves embedding the constraint that corresponds to the direction of optimization.

\section{Graph Problems as Positive LPs} \label{sec:MWUProbs} 

We now consider several graph problems.  We first define integer programming
(IP) formulations which exactly model the underlying graph problem.  We then
obtain an LP by relaxing the integrality constraints. For some problems, such as
bipartite matching and densest subgraph, the solution to the LP relaxation
matches the IP's solution (i.e., the solution is integral), whereas for NP-hard
problems dominating set and vertex cover there is an \textit{integrality gap}
between the solution to the LP relaxation and IP.
See~\cite{vazirani2013approximation,williamson2011design} for the role of LP
relaxations in the development of approximation algorithms, and
also~\cite{li2020performance} for a performance study on rounding an LP
relaxation to an integral solution for dominating set. Our goal
is to design a general-purpose solver, and the design and performance of
rounding schemes are problem dependent (see the end of
subsection~\ref{sec:pos_lp_solvers} for further details). Therefore, we do not
consider rounding in our implementation nor performance comparisons.

Let $G=(V,E)$ be an unweighted, undirected graph where $V$ is the set of vertices and $E$ is the set of edges, with $n=|V|$ and $m=|E|$. For simplicity, we assume $G$ has no self-loops.
Note that our formulations can be extended towards weighted graphs as well.
For a vertex $v \in V$, let
$N(v)$ be the neighbor vertices of $v$ ($v$ is not included in $N(v)$), and $\mathrm{inc}(v)$ be the set of edges incident to $v$. 

The neighbor relations between the vertices of $G$ can be represented as an adjacency matrix, $\B{A} \in \{0,1\}^{n \times n}$, which is symmetric and has a nonzero for each edge $e \in E$. 
The incidence relation between the vertices and the edges can be represented as a vertex-edge incidence matrix, $\B M$, where 
\begin{equation} \label{eq:incidence} 
    \B{M}_{u,e} = \begin{cases} 
    1 \ : \ u \in e, e \in E\\
    0 \ : \ \text{otherwise}
    \end{cases}, \ \B{M} \in \{0,1\}^{n \times m}.
\end{equation}
In this work, we consider four graph applications: \emph{maximum matching}, \emph{dominating set}, \emph{vertex cover}, and \emph{densest subgraph}.

\noindent
\textbf{Maximum Matching (\matching).} A matching is a subset $F \subset E$ of edges such that no vertex is 
incident to more than one edge in $F$. 
We can write this optimization problem as the IP,
\begin{equation} \label{eq:matching}  
\begin{split}
	\max  \sum\limits_{e \in E} x_e  \ 
	\text{s.t.} & \ \sum\limits_{e \in \mathrm{inc}(v)} x_e \leq 1 \ \forall v \in V \\
	& \ x_e  \in \{0,1\} \ \forall e \in E.
\end{split}
\end{equation}
The $x_e$ variables indicate whether edge $e$ is in set $F$. The constraints are defined over the vertices of the graph such that at most one edge ($x_e$) in the incident edges of a vertex $v$ ($\mathrm{inc}(v)$) can be selected for matching set $F$.
When the input graph is a bipartite graph, we call this problem \textbf{maximum bipartite matching} (\bimatching).

\begin{figure}[t]
    \captionsetup[subfloat]{farskip=0pt,captionskip=0pt}
 \centering
    \subfloat[\matching]{\includegraphics[width=1.8cm]{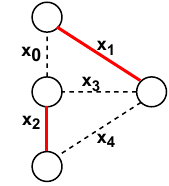}\label{fig:matching_vis}}
   ~
    \subfloat[\domset]{\includegraphics[width=1.6cm]{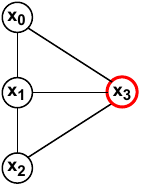}\label{fig:domset_vis}}
   ~
    \subfloat[\vcover]{\includegraphics[width=1.6cm]{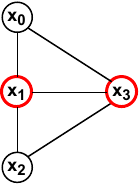}\label{fig:vcover_vis}}
   ~
    \subfloat[\densest]{\includegraphics[width=1.85cm]{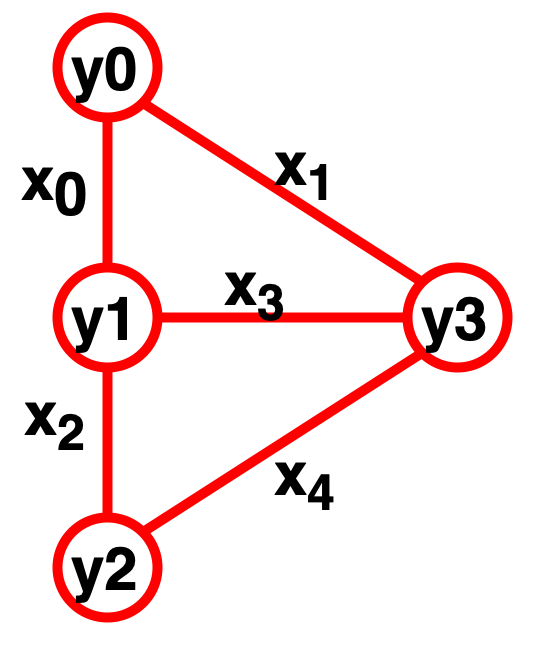}\label{fig:densest_vis}}
   
\centering
\caption{Four graph problems run on the same graph. Variables of LP with a nonzero value are highlighted in red. An example of matching is given in Figure~\ref{fig:matching_vis}. The set of edges in the matching are marked with thick red lines.
}
\label{fig:ProbEx}
\end{figure}

Given the vertex-edge incidence matrix $\B{M}$ of the graph, we can write the LP relaxation for
maximum matching as the pure packing LP,
\begin{equation} 
	\max  \ \langle \mathbb{1}, \B x \rangle
	\text{ s.t. } \B{Mx} \leq \mathbb{1},  
	\B x \geq \mathbb{0}, \ \B x \in \mathbf{R}^m.
\end{equation}
It is well-known that this LP relaxation has no integrality gap for \bimatching while for general graphs there is an integrality gap of $2/3$; there is an exponential sized exact LP relaxation for general graph matching but we do not consider it here. 

\noindent
\textbf{Dominating Set (\domset).} 
Dominating set is the problem of finding the smallest subset of vertices $S \subseteq V$ such that every vertex in the graph is either in $S$ or is a neighbor of a vertex in $S$. 
We can formulate the \domset problem as the IP,
\begin{equation} 
\begin{split}
	\min \ \sum\limits_{v \in V} x_v \ 
	\text{ s.t. } & \ x_v + \sum\limits_{u \in N(v)} x_u \geq 1, \ \forall v \in V \\
	& \ x_v  \in \{0,1\} \ \forall v \in V.
\end{split}
\end{equation}
The variable $x_v$ indicates whether vertex $v$ is in set $S$ or not. The constraints are defined over
the vertices such that either vertex $v$ itself or one of its neighbors is in the set $S$. The LP
relaxation is a pure covering LP,
\begin{equation} 
\begin{split}
	\min \ \langle \mathbb{1}, \B x \rangle
	\text{ s.t. } (\B{I}+\B{A})\B x \geq \mathbb{1}, 
	\B x  \geq \mathbb{0}, \ \B x \in \mathbf{R}^n,
\end{split}
\end{equation}
where $\B{I}$ is the identity matrix. 


\noindent\textbf{Vertex Cover (\vcover).} In the vertex cover problem the goal is to find the smallest
subset of vertices $S \subseteq V$ such that every edge has one of its endpoints
in $S$ (hence $S$ covers all the edges). 
A simple IP formulation is
\begin{equation} 
\begin{split}
	\min \ \sum\limits_{v \in V} x_v \ 
	\text{s.t.} & \  x_u + x_v \geq 1 \ \forall (u,v) \in E \\
	& \ x_v  \in \{0,1\} \ \forall v \in V.
\end{split}
\end{equation}
Here, the $x_v$ variables determines whether a vertex $v$ is in set $S$. These variables are defined over the vertices. There is one constraint per edge. The LP relaxation is a pure covering LP,
\begin{equation}  
\begin{split}
	\min \ \langle \mathbb{1}, \B x \rangle
	\text{ s.t. }  \B{M}^\mathsf T x\B  \geq \mathbb{1}, 
	\B x \geq \mathbb{0}, \B x \in \mathbf{R}^n,
\end{split}
\end{equation}
where $\B{M}^T$ is the transpose of the vertex-edge incidence matrix of the graph.

\noindent\textbf{Densest Subgraph (\densest).} 
Densest subgraph finds a subgraph $S \subset G$ that maximizes the edge to vertex
count ratio, i.e., $|E(S)|/|S|$. 
The LP, as formulated
in~\cite{charikar2000greedy}, is
\begin{equation} \label{eq:densetSubgraphLP}
\begin{split}
	\max \ \sum\limits_{e \in E} x_e  \ 
	\text{s.t.} & \ x_e \leq y_u, x_e \leq y_v \ \ \forall e = (u,v) \in E \\
	& \ \sum\limits_{v \in V} y_v \leq 1 \\
	& \ x_e,y_v  \geq 0 \ \forall v \in V, \forall e \in E.
\end{split}
\end{equation}
The variables $x_e$ represent the edges and the variables $y_v$ represent
the vertices, which are no longer binary. 
Since this problem is not a positive LP, we consider its dual~\cite{boob2020flowless}, 
\begin{equation} \label{eq:dualDS} 
\begin{split}
	\min D \text{ s.t.}& \ z_{u,e} + z_{v,e} \geq 1 \ \forall e = (u,v) \in E \\
	& \ \sum\limits_{e \in \text{inc}(v)} z_{v,e} \leq D \ \forall v \in V \\
	&\ z_{v,e} \geq 0 \ \forall v \in V, e \in \mathrm{inc}(v).
\end{split}
\end{equation}
While~\eqref{eq:dualDS} is still not a positive LP, we can convert it to a
mixed packing and covering LP by fixing $D$ to be a constant and 
treating~\eqref{eq:dualDS} as a feasibility problem instead. We find an
approximate minimum value to~\eqref{eq:dualDS} via binary search for $D$ and solving the feasibility LP for each choice. 


Unlike previous LPs, the variables
$z_{v,e}$ represent a vertex-edge pair instead of vertices or edges,
so we require new constraint matrices.
Let $\B{I}$ be an identity matrix of size $m$, and let the 
function $\mathrm{interweave}$ take two equally-sized matrices and put the 
first and second matrices' first columns as the first two columns of the 
combined matrix, then their second columns as the next pair of columns, and so 
on. 
We call the resulting matrix $\B{W}$ the \emph{interweaved identity matrix},
where
\begin{equation} \label{eq:interweaved} 
    \B{W}_{e,2e}, \B{W}_{e,2e+1} = 1,\  \forall e \in E, \B{W} \in \{0,1\}^{n \times 2m}
\end{equation}

In order to model the vertex-edge pair variables, we can form a matrix called vertex-edge pairs matrix. Vertex-edge pairs matrix will have a column for each vertex $v$ and edge ($u,v$).
Specifically, we can form the vertex-edge pair matrix, $\B{O} \in \{0,1\}^{n \times 2m}$, from a graph $G$ as follows:
\begin{equation} \label{eq:oddshifted_incidence}
\begin{split}
    \B{O}_{u,2e+b} = \begin{cases} 
    1 &: b=0, e=(u,v) \in E \\
    1 &: b=1, e=(v,u) \in E \\
    0 &: \text{otherwise}
    \end{cases}
\end{split}
\end{equation}
Then the feasibility variant of the dual to densest subgraph is

\begin{equation} \label{eq:densest_linalg}
	\exists \B z \in \mathbb{R}^{2m}
	\text{ s.t.} \ \B{W}z \geq \mathbb{1},
	\ \B{O}\B z \leq D \cdot \mathbb{1},
	\ \B z \geq \mathbb{0}.
\end{equation}

\section{
\sout{Heuristics to Improve Convergence in Practice} MWU with Line Search} \label{sec:AlgImprove}
We propose two methods for finding a step size in \texttt{StepSize}
(Line~\ref{line:find_step_size}). To motivate these methods, we
cast them as 1-D optimization problems, similar to line search methods for
(gradient) descent methods~\cite{nocedal2006numerical}. However, instead of
finding a step size that minimizes the objective
function~\cite{nocedal2006numerical}, we take the largest step that ensures a
local invariance condition is satisfied.

\begin{algorithm}[h]
\caption{Multi-Update MWU Method with step size search for Mixed Packing and Covering LPs} \label{alg:alg2}
\footnotesize
\begin{algorithmic}[1]
\Procedure{MWU\_STEPSIZE\_SEARCH}{$\B{P} \in \mathbf{R}_+^{m_P \times n}$, $\B{C} \in \mathbf{R}_+^{m_C \times n}$, $\epsilon$ }
  \State{Initialize $\eta$, and $x_i$ as in Algorithm~\ref{alg:exp1}} \label{line:init2}
  \State {$\B y \gets \B{Px}$, $\B z \gets \B{Cx}$} \label{lst:line:cons1}
  \While{constraints not approximately satisfied and $\B C \ne \varnothing$} 
    \State {$\B g \gets \B{P}^\mathsf T \nabla \mathrm{smax}_\eta(\B{y})$} \label{line:smax} \Comment{Packing gradient}
    \State {$\B h \gets \B{C}^\mathsf T \nabla \mathrm{smin}_\eta(\B{z})$} \label{line:smin}\Comment{Covering gradient}
    \State{$d_i \gets \frac{1}{2\eta}\mathrm{max} \{0,1-\frac{g_i}{h_i}\} \cdot x_i \ \forall i$} \label{line:direction} \Comment{New step direction}
    \If{$\max(\B{d}) = 0$} \label{line:get_maxd}
        \State{Return ``INFEASIBLE''} \label{line:infeasible}
    \EndIf
    \State{$\B{d}^{(y)} \gets \B{Pd}$, $\B{d}^{(z)} \gets \B{Cd}$} \label{line:newcons}
    \State {\color{red} $\alpha \gets \operatorname{StepSize}(\B d, \B y, \B z, \B{d}^{(y)}, \B{d}^{(z)}, \eta)$}  \label{line:find_step_size} \Comment{Step size search}
    \If{$\alpha < 1$}
        \State{Return ``INFEASIBLE''} \label{line:term_step_size} 
    \EndIf
    \State{$\B x \gets \B x + {\color{red} \alpha} \cdot \B d$} \label{line:newupdate}
    \State{$\B y \gets \B y + {\color{red} \alpha} \cdot \B{d}^{(y)}$,  $\B z \gets \B z + {\color{red} \alpha} \cdot \B{d}^{(z)}$} \label{line:cons2}
    \State {$\B C \gets \{ c_i \ : \ c_i^Tx < 1 \}$} \label{line:nodrop}
  \EndWhile
  \State{\Return $x$}
\EndProcedure
\end{algorithmic}
\label{alg:exp2}
\end{algorithm}

We store the packing and covering constraints 
$\B y = \B{Px}$ and $\B z = \B{Cx}$ as well as
$\B{d}^{(y)} = \B{Pd}$ and $\B{d}^{(z)} = \B{Cd}$ (Line~\ref{lst:line:cons1} and~\ref{line:newcons}) to minimize the number of sparse matrix-vector products, or SpMVs.
The step direction $\B d$ is unchanged (Line~\ref{line:direction}). While the algorithm drops satisfied constraints (Line~\ref{line:nodrop}), in practice we keep satisfied constraints since this simplifies the implementation and we did not find it impacts the convergence on the problems we tested.

The sub-routine \texttt{StepSize} (Line~\ref{line:find_step_size}) takes the step vector $\B d$ and constraint vectors, and returns a step size $\alpha > 0$. We call 
this modification \emph{step size search} and design algorithms for it in the next subsection. When 
$\alpha < 1$, we report that finding a solution is infeasible, because otherwise a step size of
$\alpha = 1$ is always possible 
due to the theoretical analysis of Mahoney et. al.~\cite{mahoney2016approximating}. Therefore, we call a step size $\alpha=1$ found without step size search the \emph{standard step size}. Assuming $\alpha \geq 1$, we
scale the step direction $\B d$ by $\alpha$ and add it to 
$\B x$ (Line~\ref{line:newupdate}). Afterwards, we update
$\B y = \B{Px}$ and 
$\B z= \B{Cx}$ without SpMVs (Line~\ref{line:cons2}). Note that $\alpha$ may be large enough so that
$\B{Px} = \B y + \alpha \cdot \B{d}^{(z)} \geq 1$, in which case we terminate MWU.

\subsection{Line Search as a Constrained Optimization Problem}
In this section, we consider algorithms for finding a step size for \texttt{StepSize}. 
When selecting $\alpha$, we want it to be
sufficiently large to accelerate MWU convergence while ensuring that we recover a feasible solution to~\eqref{eq:NormFeasibleLP}.


First, we consider the case of the mixed packing covering LP. One of the qualities of Mahoney et. al's algorithm is that the algorithm reaches a $(1+\epsilon)$-approximation when the difference in a potential function $f(x) = \frac{1}{\eta} ( \mathrm{smax}_{\eta}(\B{Px}) - \mathrm{smin}_{\eta}(\B{Cx}) )$ becomes sufficiently small \cite{mahoney2016approximating}. Each step that their algorithm takes is non-increasing on $f(x)$.


We can show that in order for the potential function to be non-increasing, $f(x^{(t+1)}) - f(x^{(t)}) = \Psi(\alpha) - \Phi(\alpha) \le 0$
where,
\begin{align*} 
    \Phi(\alpha) &= \mathrm{smin}_{\eta}(\B{C}(\B x + \alpha \cdot \B d)) - \mathrm{smin}_{\eta}(\B{Cx}) \\
    \Psi(\alpha) &= \mathrm{smax}_{\eta}(\B{P}(\B x + \alpha \cdot \B d)) - \mathrm{smax}_{\eta}(\B{Px}).
\end{align*}

This gives us an equivalent invariant $f(\alpha) =  \Phi(\alpha) / \Psi(\alpha) \geq 1$. Therefore, if we find a step size $\alpha$ for which this invariant holds, then for this $\alpha$ we can still say $f(x)$ is non-increasing, and furthermore, if we can reach a point where $\B{Px} \le (1+\epsilon)\mathbb{1}$ and  $\B{Cx} \ge \mathbb{1}$ then we have converged to a feasible solution.
Hence, our method is to find the largest step size $\alpha>0$ such that the ``bang-for-buck'' value is at least one, or
\begin{equation} \label{eq:b4b_conv} 
f(\alpha) =  \Phi(\alpha) / \Psi(\alpha) \geq 1,
\end{equation}
For pure packing and pure covering problems we instead have these invariants, respectively:
\begin{equation}
\begin{split}
\langle \mathbb{1}, \alpha \B d \rangle / \Psi(\alpha) \ge 1 \\
\langle \mathbb{1}, \alpha \B d \rangle / \Phi(\alpha) \le 1 \\
\end{split}
\end{equation}

We now show MWU with line search maintains the same theoretical properties as MWU with the standard step size~\cite{mahoney2016approximating}. Recall $\B x \in \mathbf{R}^n$ and $m$ is the number of rows in the matrices $\B{P}$ and $\B{C}$.
\begin{theorem}
    MWU with line search (Algorithm~\ref{alg:exp2}) either returns an  $(1+\epsilon)$-relative approximate solution, i.e., an $x \geq \mathbb{0}$ such that $\B{Px} \leq (1+\epsilon) \mathbb{1}$ and $\B{Cx} \geq \mathbb{1}$, or correctly reports the LP is infeasible. The number of iterations is at most $\tilde{O}(\epsilon^{-3})$, where $\tilde{O}$ hides polylogarthmic dependence on $n$, $m$, and $\epsilon$. 
\end{theorem}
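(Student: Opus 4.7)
The plan is to reduce correctness to the analysis of Mahoney et al.~\cite{mahoney2016approximating} by showing that the line-search extension never takes a step that the standard analysis would reject, and that any inability to reach $\alpha \ge 1$ is itself a certificate of infeasibility. The key fact I would import is the ``bang-for-buck'' lemma of Mahoney et al.: at any iterate $\B x$ with nonzero step direction $\B d$ defined on Line~\ref{line:direction}, if the mixed packing/covering LP is feasible, then the unit step $\alpha = 1$ satisfies $\Phi(1)/\Psi(1) \ge 1$. Since the sub-routine \texttt{StepSize} returns the largest $\alpha > 0$ satisfying $\Phi(\alpha)/\Psi(\alpha) \ge 1$, feasibility forces $\alpha \ge 1$. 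Contrapositively, whenever the routine returns $\alpha < 1$, the LP is infeasible, which validates the report on Line~\ref{line:term_step_size}. The $\max(\B d) = 0$ branch on Line~\ref{line:get_maxd} is identical to Algorithm~\ref{alg:exp1} and its correctness is inherited directly.

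Next I would handle the $(1+\epsilon)$-approximation guarantee. Let
\[
\phi(\B x) \;=\; \mathrm{smax}_\eta(\B{Px}) - \mathrm{smin}_\eta(\B{Cx})
\]
be the potential tracked by Mahoney et al. Their proof shows that, when the current iterate is not yet approximately feasible, the unit step decreases $\phi$ by at least $\Omega(\epsilon/\log m)$ (after the appropriate normalization that yields the $\tilde{O}(\epsilon^{-3})$ bound). I would then show that scaling the same direction $\B d$ by any $\alpha \ge 1$ satisfying the invariant~\eqref{eq:b4b_conv} decreases $\phi$ by at least as much. Concretely, along the ray $\B x + t \B d$, convexity of $\mathrm{smax}_\eta$ and concavity of $\mathrm{smin}_\eta$ make both $\Psi(t)$ and $\Phi(t)$ monotonically non-decreasing in $t$, so the invariant $\Phi(\alpha) \ge \Psi(\alpha)$ is exactly the statement that the net change $\phi(\B x + \alpha \B d) - \phi(\B x) = \Psi(\alpha) - \Phi(\alpha)$ is non-positive, and in fact at least as negative as at $\alpha = 1$ whenever the gap $\Phi - \Psi$ is increasing on $[1,\alpha]$. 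This yields a per-iteration potential drop at least as large as in the standard algorithm, and thus the same total iteration count $\tilde{O}(\epsilon^{-3})$. Termination then gives $\B{Px} \le (1+\epsilon)\mathbb{1}$ and $\B{Cx} \ge \mathbb{1}$ by the same thresholding argument as in Mahoney et al., using that $\mathrm{smax}_\eta$ and $\mathrm{smin}_\eta$ approximate $\max$ and $\min$ within an additive $\epsilon$ for the chosen $\eta = 10\log(m)/\epsilon$.

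The main obstacle will be establishing the monotonicity claim ``larger $\alpha$ does at least as much good'' in a form tight enough to preserve the exact $\tilde{O}(\epsilon^{-3})$ bound, rather than merely non-increasing. Because $\mathrm{smax}_\eta$ is convex, $\Psi(\alpha)$ is convex in $\alpha$, and this convexity could in principle inflate $\Psi$ faster than $\Phi$ grows. The saving grace is that the invariant $\Phi(\alpha)/\Psi(\alpha) \ge 1$ is enforced for \emph{every} $\alpha$ that the line search accepts, so the potential is guaranteed to be non-increasing at the step actually taken. I would therefore argue the iteration bound by a telescoping argument on $\phi$: each iteration with $\alpha \ge 1$ contributes a decrease at least as large as one standard-step iteration would; the total decrease of $\phi$ across the algorithm is bounded by the same quantity as in Mahoney et al.'s analysis; dividing by the minimum per-iteration progress yields the $\tilde{O}(\epsilon^{-3})$ bound (or $\tilde{O}(\epsilon^{-2})$ in the pure packing/covering case, where the analogous invariants stated in~\eqref{eq:b4b_conv} plus the factor-of-two improvement discussed in Section~\ref{sec:mwu_alg_sec} apply). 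A secondary subtlety is that Algorithm~\ref{alg:exp2} drops satisfied covering constraints on Line~\ref{line:nodrop}, which alters $\B C$ between iterations; this is handled in Mahoney et al.'s analysis by bounding the contribution of each dropped constraint to the covering potential, and that argument carries over unchanged since our step size only increases progress.
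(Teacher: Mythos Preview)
Your reduction to Mahoney et al.\ for correctness and infeasibility detection is sound and matches the paper: the bang-for-buck invariant~\eqref{eq:b4b_conv} guarantees the potential $\phi(\B x) = \mathrm{smax}_\eta(\B{Px}) - \mathrm{smin}_\eta(\B{Cx})$ is non-increasing, and feasibility implies $\alpha = 1$ is always acceptable, so $\alpha < 1$ certifies infeasibility.

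However, your iteration-count argument has a genuine gap. You claim that ``each iteration with $\alpha \ge 1$ contributes a decrease at least as large as one standard-step iteration would,'' but this is false in general. The function $h(t) = \Phi(t) - \Psi(t)$ is concave with $h(0) = 0$, so it rises and then falls; the line search, by design, selects $\alpha$ near the \emph{largest} root of $h$, where $h(\alpha) \approx 0$. At that point the potential drop $\Phi(\alpha) - \Psi(\alpha)$ can be strictly smaller than $\Phi(1) - \Psi(1) > 0$. So a telescoping argument on $\phi$ does not give you a per-iteration lower bound matching the standard step, and the $\tilde O(\epsilon^{-3})$ bound does not follow from potential decrease alone. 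You correctly flagged this as the ``main obstacle,'' but the fallback you propose (``the invariant guarantees non-increasing'') only yields monotonicity, not a quantitative rate.

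The paper bounds iterations differently, and you should adopt that route: Mahoney et al.\ obtain the iteration count not from potential telescoping but from a lower bound on the coordinates of the step direction $\B d$ (essentially, each $x_i$ grows multiplicatively and can do so only a bounded number of times before termination). Since line search replaces $\B d$ by $\alpha \B d$ with $\alpha \ge 1$, every coordinate grows at least as fast as under the unit step, so the same coordinate-growth counting argument gives at most the same number of iterations. This sidesteps the potential-drop issue entirely.
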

The proof is similar to the one shown in~\cite{mahoney2016approximating}, which implicitly sets the step size to $\alpha = 1$. There will be two main differences in the convergence proof, which we highlight here. First, the proof of correctness in ~\cite{mahoney2016approximating} shows the potential function, defined as $\mathrm{smax}_\eta(\B{Px}) - \mathrm{smin}_{\eta}(\B{Cx})$, is monotonically decreasing by taking a first-order approximation of smooth max and min. On the other hand, our bang-for-buck invariance~\eqref{eq:b4b_conv} explicitly ensures this monotonicity property. Second, the argument in~\cite{mahoney2016approximating} upper bounds the number of MWU iterations by lower bounding the values in step direction vector $\B{d}$. Since line search only increases $\B{d}$ because $\alpha \cdot \B{d} \geq \B{d}$, then line search can only decrease the number of MWU iterations.

While line search can decrease the number of iterations (in fact, quite significantly in our experiments), finding a step size increases the work per iteration. In the following lemma, we leverage the monotonicitiy of $f(\alpha)$ to design efficient line search algorithms.


%
%
\begin{proposition} \label{Prop.MonoInc}
    $f$ is monotonically decreasing for $\alpha \in \mathbf{R}_+$.
\end{proposition}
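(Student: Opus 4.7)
The plan is to exploit the curvature of the smoothed max and smoothed min together with the fact that $\Phi(0)=\Psi(0)=0$ and that $\Phi,\Psi$ are nondecreasing in $\alpha$.

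First I would recall the standard fact that the log-sum-exp function $\mathrm{smax}_\eta$ is convex and $\mathrm{smin}_\eta$ is concave on $\mathbf{R}^n$ (this is a classical result for the free energy / log-sum-exp; a proof via the positive semidefiniteness of its Hessian appears in Chapter~2 of~\cite{quanrud2019thesis}). Because $\B P,\B C \geq 0$ and $\B d \geq 0$, the maps $\alpha \mapsto \B{P}(\B x + \alpha \B d)$ and $\alpha \mapsto \B{C}(\B x + \alpha \B d)$ are affine in $\alpha$ with componentwise nondecreasing image. Composing convex with affine preserves convexity, and composing concave with affine preserves concavity, so $\Psi(\alpha)$ is convex and $\Phi(\alpha)$ is concave. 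Moreover $\Psi'(\alpha)=\langle \nabla\mathrm{smax}_\eta(\B{P}(\B x+\alpha\B d)),\B{Pd}\rangle\geq 0$ and similarly $\Phi'(\alpha)\geq 0$, with strict positivity whenever $\B{d}\neq \mathbb{0}$ (which is guaranteed by the infeasibility check on Line~\ref{line:get_maxd}). Finally, by construction $\Phi(0)=\Psi(0)=0$.

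The central observation is the following classical ``sublinearity'' lemma: if $h:\mathbf{R}_+\to\mathbf{R}_+$ is concave with $h(0)=0$, then $\alpha\mapsto h(\alpha)/\alpha$ is nonincreasing; dually, if $h$ is convex with $h(0)=0$, then $h(\alpha)/\alpha$ is nondecreasing. Both follow from the chord-secant characterization of convexity/concavity applied to the points $0$ and $\alpha$. Applying these to $\Phi$ and $\Psi$ respectively, $\Phi(\alpha)/\alpha$ is nonincreasing and $\Psi(\alpha)/\alpha$ is nondecreasing. Writing
\begin{equation*}
f(\alpha)=\frac{\Phi(\alpha)}{\Psi(\alpha)}=\frac{\Phi(\alpha)/\alpha}{\Psi(\alpha)/\alpha},
\end{equation*}
the numerator is a nonincreasing positive function of $\alpha$ and the denominator is a nondecreasing positive function of $\alpha$, so $f$ is nonincreasing. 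Strict monotonicity at an $\alpha>0$ follows whenever at least one of the curvatures is strictly active, i.e., whenever $\Psi$ is strictly convex or $\Phi$ is strictly concave along the ray $\B{P}(\B x+\alpha\B d)$, $\B{C}(\B x+\alpha\B d)$; since $\mathrm{smax}_\eta$ and $\mathrm{smin}_\eta$ are strictly convex/concave on any set where at least two coordinates are finite, this holds in all nondegenerate cases of interest.

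The main obstacle I anticipate is handling degenerate cases cleanly: in particular making sure $\Psi(\alpha)>0$ for $\alpha>0$ so the ratio is well-defined (this follows from $\B{Pd}\neq\mathbb{0}$, which in turn follows from $\B{d}\neq\mathbb{0}$ and the nonnegativity of $\B P$ combined with the assumption that every variable appears in at least one packing constraint, which is used in Line~\ref{line:init_vec} of Algorithm~\ref{alg:exp1} when initializing $\B x$), and ruling out $\Phi(\alpha)=0$ similarly. Once these base-case positivity conditions are established, the two one-line arguments above deliver the proposition.
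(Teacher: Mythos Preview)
Your proposal is correct and follows essentially the same approach as the paper: both establish that $\Psi$ is convex and $\Phi$ is concave (as affine compositions of $\mathrm{smax}_\eta$ and $\mathrm{smin}_\eta$), both use $\Phi(0)=\Psi(0)=0$ to conclude that $\Psi(\alpha)/\alpha$ is nondecreasing and $\Phi(\alpha)/\alpha$ is nonincreasing, and both finish by writing $f(\alpha)=(\Phi(\alpha)/\alpha)/(\Psi(\alpha)/\alpha)$. The only cosmetic difference is that the paper verifies the sublinearity fact by differentiating $\Psi(\alpha)/\alpha$ and using the tangent inequality $\Psi(\alpha)\le \alpha\Psi'(\alpha)$, whereas you invoke the chord--secant characterization directly; your added discussion of positivity and strictness is more careful than the paper's, which tacitly assumes these hold.
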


\begin{proof}
We show that as $\alpha$ increases $\Psi(\alpha)/\alpha$ is increasing while $\Phi(\alpha)/\alpha$ is decreasing, hence
\[f(\alpha) = (\Phi(\alpha)/\alpha)/(\Psi(\alpha)/\alpha)\] is decreasing.
Note that $\Psi$ is convex since $\mathrm{smax}_\eta$ is convex, and $\Phi$ is concave since $\mathrm{smin}_\eta$ is concave.
Since $\Psi$ is convex, $\Psi(\alpha) \leq \Psi(0) + \alpha \Psi'(\alpha) = \alpha \Psi'(\alpha)$.
Hence, we can show that $\Psi(\alpha)/\alpha$ is increasing, since
\[(\Psi(\alpha)/\alpha)' = \frac{1}{\alpha}(\Psi'(\alpha) - \Psi(\alpha)/\alpha) \geq 0.\]
Analogously, since $\Phi$ is concave, the inequalities above are reversed, and so $\Phi(\alpha)/\alpha$ must be strictly decreasing in $\alpha$.
\end{proof}

\subsection{Implementing Line Search}
To approximate the maximum step size $\alpha^*$ satisfying~\eqref{eq:b4b_conv}, we first perform exponential search to find an integer $p$ where $f(2^p) \geq 1$ and $f(2^{p+1}) < 1$. By Proposition~\ref{Prop.MonoInc}, 
$\alpha^* \in [2^p,2^{p+1})$. Next, we run binary search
starting with a lower and upper bound of $l=2^p$ and
$u=2^{p+1}$, and update the lower and upper bounds so that $l$ (resp. $u$) 
is the largest (smallest) value such that $f(l) \geq 1$ ($f(u) < 1$). 
Once we find an $\epsilon$-relative step size, or when 
$\frac{u-l}{l} \leq \epsilon$, we return $l$. We formalize the aforementioned procedure in Algorithm~\ref{alg:binary_step}.

When $f(\alpha) \geq 1$ and 
$\max\big(\B{C}( \B x + \alpha \cdot \B{d})\big) \geq 1$ (Line~\ref{line:early_stop_bin}), this means the
step size can lead MWU to completion, so we
immediately return $\alpha$. Moreover, 
binary search makes use of $\B y = \B{Px}$, $\B z = \B{Cx}$, $\B{d}^{(y)} = \B{Pd}$, and
$\B{d}^{(z)} = \B{Cd}$, where $\B x$ is our current solution and $\B d$ is
the computed MWU update direction, to avoid additional SpMVs. 

A similar line search, which is a coordinate binary search, was proposed
in~\cite[Section 2.8]{quanrud2019thesis}, but there are some important
differences compared to our binary search. First, rather than taking a step in
the full gradient direction $\B d \in \mathbb{R}^n$, the coordinate binary
search updates sequentially in each of the $n$ indices. Thus, the binary search
can have a critical path of length up to $n$ and is therefore not parallel.
Second, the coordinate binary search replaces the difference in the smooth min
and smooth max from $\Phi$ and $\Psi$  with their respective first-order
approximations, which incurs approximation errors in~\eqref{eq:b4b_conv} and can
lead to more conservative step sizes. By conducting line search in the full
gradient $\B d$ and using the exact difference for $\Phi$ and $\Psi$, our
proposed binary search improves upon the previous line search in non-trivial
ways and can take more aggressive (i.e., larger) step sizes while maintaining
feasibility.

\begin{algorithm}[h]
\caption{Finding a step size via binary search}
\footnotesize
\begin{algorithmic}[1]
\Procedure{BinSearch}{$\{\B y, \B{d}^{(y)} \}\in \mathbf{R}^{m_p}_{\geq 0}$, 
$\{ \B z, \B{d}^{(z)} \} \in \mathbf{R}^{m_c}_{\geq 0}$, $\epsilon$}       
  \State $\alpha \gets 1$ 
  \While{$f(\alpha) \geq 1$} \Comment{Exponential search. See~\eqref{eq:b4b_conv}} 
    \If{ $\mathrm{min}(\B z + \alpha \cdot \B{d}^{(z)}) \geq 1$ } \label{line:early_stop_bin}
      \State \Return $\alpha$ \Comment{Return early if constraints are satisfied}
    \EndIf
    \State $\alpha \gets 2 \cdot \alpha$
  \EndWhile
  \State $lb, ub \gets \alpha/2, \alpha$ \label{line:bin_bnds}
  \While{$ub-lb > (1-\epsilon) lb$} \Comment{Binary search}
    \State $\beta \gets \mathrm{avg}(lb,ub)$
  	\If{$f(\beta) \geq 1$} \Comment{See~\eqref{eq:b4b_conv}}
		\State $lb \gets \beta$
  	\Else
    	\State $ub \gets \beta$
	\EndIf
  \EndWhile
  \State $\alpha \gets lb \text{ and } \Return \ \alpha$
\label{roys_loop_binary}
\EndProcedure
\end{algorithmic}
\label{alg:binary_step}
\end{algorithm}

We can derive another line search method by 
using Newton's method, which has the update,
\begin{equation*} 
    \alpha_{k+1} = \alpha_{k} - g(\alpha_{k})/g'(\alpha_{k}), \text{ where } g(\alpha) = f(\alpha)-1.
\end{equation*}
Because Newton's method converges when its solution is in the neighborhood of the optimal solution, we require estimates of $\alpha^*$ to ensure convergence. We do so via a warm start for Newton's search, where we set our initial $\alpha_0$ to the previous optimal step size, if available, or use exponential search. The reason for the former strategy is we observed in our tests that the optimal step size between two MWU iteration are relatively close. Finally, we note that once Newton's method converges to some solution, it may not strictly satisfy~\eqref{eq:b4b_conv}. Thus, we multiplicatively decrease the solution by a factor of $(1-\epsilon)^p$ for some integer $p$ ($p$ is typically small) until~\eqref{eq:b4b_conv} is satisfied.

\section{Software Optimizations and Parallelization}
\label{sec:optimizations}
We now describe the details of our implementation and parallelization of linear algebra operations within MWU.
To efficiently perform sparse matrix-vector products with matrices introduced in Section~\ref{sec:MWUProbs}, such as the vertex-edge adjacency matrix, we leverage implicit representations derived from a standard sparse vertex-vertex adjacency matrix data structure.
We accelerate vector operations with loop fusion and vectorization. 
\subsection{Shared-Memory Optimizations}

\mina{We design implicit SpMVs for matrices that arise in graph-based positive LPs such as vertex-edge incidence matrices. We adapt previous fusion techniques for the needs of MWU framework~\cite{fusion_ics98}.}

\subsubsection{Choice of Matrix Format} 


To efficiently traverse the non-zeros in the adjacency matrix during SpMVs, we use the Compressed Sparse Blocks (CSB) format~\cite{bulucc2009parallel}, which can achieve good cache locality for both SpMVs of the matrix and its transpose.

CSB divides the matrix into two-dimensional $r\times k$ tiles. Each tile is represented as a list of tuples, where each tuple stores the non-zeros in column major order in coordinate (COO) format. 
The group of tiles that belong to $r$ consecutive rows is called a \emph{row-block} while the group of tiles that belong to $c$ consecutive columns is called a \emph{column-block}. 
In our implementation, we store the tiles 
in row-major order. 

\begin{figure}[h]
    \centering
    \includegraphics[scale=0.4]{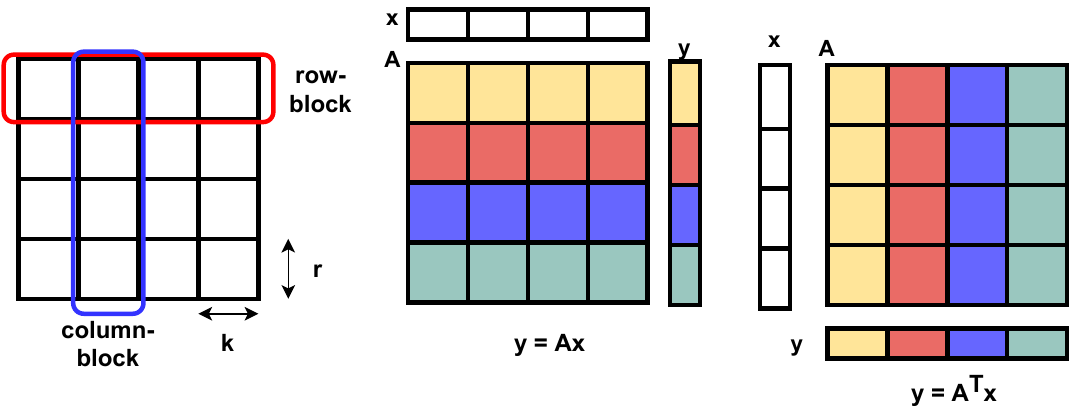}
    \caption{CSB representation and its SpMV operation.}
    \vspace*{-2ex}
    \label{fig:csb_format}
\end{figure}

Similar to~\cite{aktulga}, we parallelize $\B y=\B{A} \B{x}$ and $\B y=\B{A}^T \B x$ over the row-blocks and column-blocks, respectively.
Provided the tile size ($r\times k$) is selected carefully, the block of the input vector ($\B x$) processed by a tile and output vector $\B y$ corresponding to a row-block (updated by a single thread) is contained in private L1 or L2 caches. Figure~\ref{fig:csb_format} illustrates the parallelization of SpMV using CSB format.

\subsubsection{Implicit Representations}
\label{implicit_representation}

In each iteration of MWU, we perform one SpMV with the constraint matrix and with its transpose.
As seen in Section \ref{sec:MWUProbs}, the constraint matrices of many graph-based LPs are the vertex-edge incidence matrix of the graph.
We notice that the edge $(u,e)$ encoded in an incidence matrix can be described over over a vertex and a vertex pair $(u,(v,w))$ where the pair $(v,w)$ represents an edge in the adjacency matrix. In addition, in the incidence matrix, the value is $1$ if $u=v$ or $u=w$ and $0$ otherwise. Therefore, we can store the nonzero data in an vertex-edge incidence matrix $\B M$ implicitly as an adjacency matrix in memory. This reduces the memory cost of the constraint matrix by about half and therefore also reduce the number of accesses to the memory subsystem, particularly cache.

Having an implicit representation means that linear algebra operations on these matrices can also be expressed implicitly by formulating the computations on the adjacency matrix, $\B A$.



We emulate SpMVs for $\B M$ and $\B O$ by storing the edge, row, and column index for each non-zero in $\B A$, which we denote by $e$, $r$, and $c$, respectively. For example, we can compute $\B y=\B{M} \B x$ by accumulating element $x_e$ to $y_r$ and $y_c$ for each $(r,c,e)$ in $\B A$, and likewise for $\B y=\B{O}\B x$, then evaluating $y = y_r + y_c$. 
To parallelize this SpMV while avoiding race conditions, we first traverse in row major order while reading the row indices of $\B A$, then in column major order while reading the column indices. 

We compute $\B y = \B{M}^T \B x$ by accumulating elements $x_r$ and $x_c$ to $y_e$ for all $(r,c,e)$ in $\B A$, and likewise for $\B y = \B O^T \B x$. Parallelizing these SpMVs are straightforward, since we accumulate to any given element of $\B y$ two and one times, respectively.

\subsubsection{Loop Fusion and Vectorization Opportunities}
In each iteration of the MWU algorithm, we do several vector operations and our augmentation to step size search adds many vector operations, too. For some problems, such as vertex-cover or densest subgraph, the vectors in these operations have size $|E|$, which means they can be as costly as a single SpMV.





Since these vector operations loop to apply simple arithmetic to each element in the vector, combining multiple vector operations in one pass via loop fusions can accelerate these methods. We identify two operations for fusion: (1) the gradient calculations using \emph{smax} and \emph{smin} (Lines~\ref{line:smax} and~\ref{line:smin}) 
and (2) the
calculation of the new step direction ~\eqref{eq:b4b_conv}. In both cases, loop fusion can reduce memory accesses and facilitate automatic vectorization.

\subsection{Distributed Parallelization}

Distributed-memory parallelization of vector operations and explicit sparse matrix vector products in MWU can be done with standard techniques.
Therefore, in this section, we will focus on describing and analyzing the benefits of the implicit representation for distributed-memory communication.

We use the same implicit representation described in Section \ref{implicit_representation}. We leverage a 2D matrix distribution of the adjacency matrix to perform implicit SpMVs with the incidence matrix. A 2D data layout is communication-efficient for matrix vector products since each processor computes on only $n/\sqrt{p}$ entries and contributes to $n/\sqrt{p}$ outputs (for an $n\times n$ matrix on a $\sqrt{p}\times \sqrt{p}$ processor grid). They are commonly employed for parallel processing of adjacency matrices~\cite{bulucc2011combinatorial}.


With a 2D layout of the adjacency matrix, we perform vertex-edge incidence products with twice the communication cost of an adjacency matrix product. To do so, we store vector information corresponding to edges in the same processor layout as the adjacency matrix $A$. This means that for an edge $(u,v)$ in $A$, the machine owning the edge would store vector information for indices corresponding to $u$ and to $v$.

For simplicity, we assume a square processor grid.
With this approach, the product with the vertex-edge incidence matrix, $\B y = \B M \B x$, requires only a reduction of contributions to $\B y$ along rows and columns of the processor grid.
While for the product $\B y = \B M^T \B x$, only a broadcast of entries of $\B x$ along rows and columns of the processor grid is needed. In both cases, each processor sends or receives a subvector of size $O(\sqrt{n}/{p})$.


\section{Experimental Setup} \label{sec:exp_setup}
\subsection{System Setup}
We use Intel Knights Landing (KNL) nodes on the Stampede2 supercomputer as our testbed. 
Each KNL node has 68 1.4~GHz cores. Each core has a 32~KB private L1 cache, and 2 neighboring cores share a 1~MB L2 cache. KNL processors also support AVX2 and AVX-512 vector instructions. Each Stampede2 node has 112~GB of memory capacity with 96~GB DRAM and 16~GB MCDRAM used in cache mode.


\subsection{Implementations}
\textbf{MWU Implementations.} 

We implement two different versions of MWU ~\ref{alg:exp1}: (1) \mwupetsc, and
(2) \mwuomp. \mwupetsc relies on an efficient parallel BLAS library
PETSc~\cite{balay2019petsc} while \mwuomp is our hand-optimized implementation
using optimizations discussed in Section~\ref{sec:optimizations}. In our
implementation of Algorithm~\ref{alg:exp1}, we do not drop satisfied constraints
(i.e., we skip Line~\ref{line:term_step_size}). This simplifies the
implementation, and we did not find this affects convergence. We set $\epsilon =
0.1$ and terminated the algorithm if it exceeds 5000 iterations. To verify
correctness, we compare the solution from MWU with an exact solution, if
available. 
\texttt{max\_iter}=5000. These parameters control the accuracy and maximum
number of iterations of MWU, respectively, which were found by hand-tuning the
algorithm. To verify correctness, we compare the solution from MWU with an exact
solution, if available.

PETSc is a suite of data structures and routines for large-scale distributed operations, including vector and sparse matrix operations (which calls (sparse) BLAS under the hood), with a Python interface (\texttt{petsc4py})~\cite{balay2019petsc}. 
On Stampede2, we use PETSc with MKL version 19.1.1.  We use C++ for our \mwuomp
optimized implementation and OpenMP for parallelism.  We compile our code with
Intel compiler version 19.1.1 and enable \emph{-O3}, and \emph{-mAVX512}
compiler flags. We run the \mwuomp implementation by binding threads to physical
cores using \emph{numactl --physcpubind}. For \mwupetsc, we find that creating
$N$ processes each with 1 thread gives the best performance.

\textbf{General LP Solvers.} We compare our optimized MWU implementation to general LP solvers. We use \textit{IBM} \cplex~\cite{cplex2009v12} and
\gurobi~\cite{gurobi}, both in multi-threaded settings. If the problem is an ILP, we do not round the fractional solution.


For \cplex,
we set the run mode to \emph{opportunistic} to achieve the fastest
(but non-deterministic) run time. For \gurobi, we use the concurrent optimization 
setting, which concurrently runs \emph{primal simplex}, \emph{dual simplex}, and the
\emph{barrier method}. We report the fastest run time out of these three
methods. When the barrier method finishes first, we
report its run time \emph{before crossover} (unless otherwise noted) for a more fair comparison to MWU, which outputs fractional solutions. 
We implement all applications discussed in Section~\ref{sec:MWUProbs} with both \cplex and Gurobi.
Finally, we limit the solve time for both \cplex and \gurobi to 4 hours. All other parameters were set to defaults.

\textbf{Specialized Algorithms.} 
We consider optimized custom implementations as baselines
for the two implicitly integral LP problems: bipartite matching
and densest subgraph. For the former, we use \graft~\cite{azad2016computing},
which employs the serial Karp-Sipser greedy initialization step~\cite{karp1981maximum} followed by a 
specialized breadth-first searches to find augmenting paths.
For the latter problem,
we used the Graph Based Benchmark Suite's~\cite{dhulipala2020graph} (\gbbs) 
approximate densest subgraph algorithm, which implements Charikar's greedy 2-approximation algorithm~\cite{charikar2000greedy}. Both \graft and \gbbs are implemented in C++ with OpenMP. We compile
\graft using OpenMP and the Intel compiler (version 19.1.1) with the \emph{-O2} flag. We compiled \gbbs with the g++ compiler version 9.1.0.




\subsection{Input Graphs \label{sec:inputgraph}}
We select a variety of real-world and synthetic undirected graphs from the SuiteSparse Matrix Collection~\cite{suiteSparse} and list
them in Table~\ref{tab:GraphDescription}. 
Our real-world graphs come from diverse domains, such as a road, social, and user-product network.
We also use two sets of synthetic graphs, the 
first set being random geometric graphs (\mtxname{rgg}) which have a planar-like structure, and the second set being Kronecker graphs (\mtxname{kron}) from Graph500 which show a strong community structure. 

Note that none of the graphs we selected are bipartite, which is required in \bimatching. To obtain bipartite graphs, we read the input adjacency matrix as a biadjacency matrix, meaning that the rows and columns of the matrix correspond to the left and right
sets of vertices, respectively, where edges can only go between between vertices in different sets.

\begin{table}[h]
\footnotesize
\caption{List of real-world and synthetic graphs}
\centering
\begin{tabular}{|l||r|r|} \hline
 \textbf{Graphs (Abv.)} & \textbf{$|V|$} & \textbf{$|E|$} \\ \hline \hline
 \mtxname{usroads (usroads)} & 129,164 & 330,870  \\ \hline
 \mtxname{com-Amazon (amazon)} & 334,863 & 1,851,744 \\   \hline
 \mtxname{coPapersCiteseer (papers)} & 434,102 & 32,073,440 \\ \hline
 \mtxname{hollywood-2009 (hollyw)} & 1,139,905 & 113,891,327 \\ \hline
 \mtxname{com-Orkut (orkut)} & 3,072,441 & 234,370,166 \\ \hline 
 \mtxname{kron-X} & X=$2^{17}$-$2^{21}$ & $\approx$X$\times 80$ \\ \hline
 \mtxname{rgg-Y} & Y=$2^{17}$-$2^{24}$ & $\approx$Y$\times 15$ \\ \hline
\end{tabular}

\label{tab:GraphDescription}
\end{table}

\section{Experimental Results}
\label{Section:experimental_results}

In this section, first, we compare our implementation to state-of-the-art software including general LP solvers, \cplex and \gurobi, to specialized parallel implementations for particular graph problems, \mina{ and to the parallel implementation of another multiplicative weights update algorithm from Makari et. al ~\cite{makari2013distributed}.}

Then, we evaluate the effectiveness of our algorithmic improvements and software optimizations.
We start by finding how the 
incorporation of a step size search reduces the number of MWU iterations. We then test the performance improvements
from our software optimizations and its scalability by comparing performance between \mwupetsc
and \mwuomp.



\subsection{Comparison of MWU to Other Algorithms}
\begin{table*}[htb]
\scriptsize
\caption{Run time (in seconds) of \mwuomp compared to other optimization libraries and custom applications. Cells with a dash indicate the algorithm
took 4+ hours to run or had a memory error, with the exception
of \vcover with \gurobi on all the \mtxname{kron} graphs and \mtxname{hollyw.}, which 
had a \texttt{ConstraintError}. }
\label{tab:comparison_all}
\centering
\begin{tabular}{|r||l|l|l|l||l|l|l||l|l|l||l|l|l|l|}
\hline
\multicolumn{1}{|l||}{} &  \multicolumn{4}{c||}{\bimatching}  
& \multicolumn{3}{c||}{\domset} 
& \multicolumn{3}{c||}{\vcover} 
& \multicolumn{4}{c|}{\densest}  \\ \hline 
\multicolumn{1}{|l||}{} & 
\multicolumn{1}{c|}{\textbf{MWU}} & \multicolumn{1}{c|}{\textbf{CPLEX}} & \multicolumn{1}{c|}{\textbf{Gurobi}} & \multicolumn{1}{c||}{\textbf{graft}} & 
\multicolumn{1}{c|}{\textbf{MWU}} & \multicolumn{1}{c|}{\textbf{CPLEX}} & \multicolumn{1}{c||}{\textbf{Gurobi}} & 
\multicolumn{1}{c|}{\textbf{MWU}} & \multicolumn{1}{c|}{\textbf{CPLEX}} & \multicolumn{1}{c||}{\textbf{Gurobi}}& 
\multicolumn{1}{c|}{\textbf{MWU}} & \multicolumn{1}{c|}{\textbf{CPLEX}} & \multicolumn{1}{c|}{\textbf{Gurobi}} &  \multicolumn{1}{c|}{\textbf{gbbs}} \\ \hline \hline
\mtxname{rgg-15} & 0.08 & 9.15 & 6.93 & 0.07 & 0.38 & 5.56 & 3.60 & 2.38 & 3.99 & 7.88 & 0.16 & 24.01 & 12.90 & 0.04 \\ \hline
\mtxname{rgg-16} & 0.09 & 21.23 & 14.38 & 0.14 & 0.34 & 15.62 & 10.64 & 3.37 & 10.72 & 6.86 & 0.22 & 54.57 & 25.61 & 0.05 \\ \hline
\mtxname{rgg-17} & 0.24 & 45.35 & 31.68 & 0.32 & 0.68 & 43.11 & 22.91 & 5.65 & 28.98 & 15.90 & 0.54 & 141.95 & 119.43 & 0.09 \\ \hline
\mtxname{rgg-18} & 0.15 & 114.31 & 76.76 & 0.71 & 3.54 & 561.23 & 49.01 & 9.85 & 82.43 & 38.42 & 0.66 & 349.34 & 344.72 & 0.13 \\ \hline
\mtxname{rgg-19} & 0.54 & 283.79 & 170.81 & 1.86 & 2.08 & 3,045.28 & 111.90 & 23.48 & 114.63 & 86.94 & 1.41 & 1,202.33 & 877.80 & 0.21 \\ \hline
\mtxname{rgg-20} & 0.43 & 716.66 & 406.77 & 4.44 & 7.79 & - & 255.15 & 44.84 & 292.40 & 226.01 & 2.34 & 4,081.44 & 2,017.68 & 0.32 \\ \hline
\mtxname{rgg-21} & 0.85 & 2,186.03 & 917.63 & 11.12 & 17.25 & - & 555.60 & 87.95 & 659.18 & 504.65 & 5.15 & - & - & 0.56 \\ \hline
\mtxname{rgg-22} & 2.81 & - & - & 30.67 & 46.37 & - & 1,313.81 & 183.67 & - & - & 13.21 & - & - & 0.90 \\ \hline
\mtxname{rgg-23} & 3.92 & - & - & 80.99 & 247.68 & - & - & 367.86 & - & - & 22.40 & - & - & 1.69 \\ \hline
\mtxname{rgg-24} & 21.00 & - & - & 226.12 & 115.02 & - & - & 856.06 & - & - & 74.87 & - & - & 3.18 \\ \hline
\mtxname{kron-16} & 3.80 & 95.61 & 136.6 & 0.00 & 1.38 & 19.22 & 23.22 & 4.53 & 81.83 & - & 1.81 & 3,169.42 & - & 0.11 \\ \hline
\mtxname{kron-17} & 1.87 & 200.79 & 335.18 & 0.01 & 10.97 & 63.81 & 59.44 & 55.16 & 194.90 & - & 2.85 & 8,053.81 & - & 0.17 \\ \hline
\mtxname{kron-18} & 2.60 & 462.66 & 642.27 & 0.01 & 3.07 & 214.97 & 155.06 & 33.63 & 414.28 & - & 10.53 & - & - & 0.26 \\ \hline
\mtxname{kron-19} & 4.38 & - & - & 0.01 & 10.24 & 657.11 & 379.04 & 53.06 & 2,354.63 & - & 12.88 & - & - & 0.43 \\ \hline
\mtxname{kron-20} & 10.36 & - & - & 0.02 & 32.51 & 2,292.96 & 1,048.8 & 82.93 & 3,091.02 & - & 24.30 & - & - & 0.67 \\ \hline
\mtxname{kron-21} & 32.33 & - & - & 0.04 & 584.85 & 7,072.73 & 2,342.73 & 210.92 & - & - & 56.28 & - & - & 1.13 \\ \hline
\mtxname{usroads} & 0.07 & 20.24 & 12.30 & 0.04 & 1.49 & 16.30 & 8.85 & 1.07 & 12.31 & 5.22 & 0.27 & 35.25 & 16.80 & 0.03 \\ \hline
\mtxname{amazon} & 0.71 & 93.93 & 126.34 & 0.05 & 22.14 & 115.10 & 98.99 & 2.13 & 72.38 & - & 1.52 & 750.42 & 1,040.63 & 0.09 \\ \hline
papers & 6.44 & 3,549.09 & 542.19 & 0.33 & 10.14 & 35.14 & 49.15 & 149.95 & 767.30 & 443.05 & 3.78 & - & 3,945.37 & 0.39 \\ \hline
\mtxname{hollyw.} & 39.13 & - & - & 0.56 & 43.86 & - & - & 130.50 & - & -& 24.79 & - & - & 1.89 \\ \hline
\mtxname{orkut} & 29.99 & - & - & 29.12 & 162.24 & - & - & 334.70 & - & - & 201.09 & - & - & 3.18 \\ \hline
\end{tabular}
\end{table*}

 
We now compare the \mwuomp implementation of MWU with Newton's method and all the software implementation optimizations to other state-of-the-art optimization libraries and custom implementations (\graft for \bimatching and \gbbs for \densest). All experiments are run with 64 threads on a single KNL node. Table~\ref{tab:comparison_all} shows the execution times to find $(1+\epsilon)$-relative solutions for four positive LPs on various graphs \mina{where $\epsilon=0.1$}. A ``-'' in a cell means that the input graph was either too large to be processed by the library, or the run time exceeded 4 hours. 

\noindent\textbf{Comparison with Exact LP Solvers.}

\mina{For all LP solvers, we do not do rounding as post-processing. Therefore, for the exact solvers, we have integral solutions to graphs problems that have implicitly integral LPs, and exact fractional solutions for the relaxed LPs with integrality gaps. For approximate solvers, we are not guaranteed an integral solution on integral LPs. Note that \cplex and \gurobi return exact solutions for the target LPs, while MWU finds an $(1+\epsilon)$-relative solution with a target value of $\epsilon=0.1$.} We find that our \mwuomp implementation is able to find an $\epsilon=0.1$ solution in all cases except \bimatching problem with \mtxname{rgg-20}. However, even for this case, our error rate is 0.104.

Our results show \mwuomp consistently outperforms \cplex and \gurobi libraries. 
For \bimatching, \domset, \vcover, and \densest graph LPs, \mwuomp is up to 2548x (\mtxname{rgg-21}), 1482x (\mtxname{rgg-19}), 43x (\mtxname{kron-19}), and 2860x (\mtxname{kron-17}) faster than \cplex, respectively. Although \gurobi is generally faster than \cplex, \mwuomp outperforms \gurobi by up to 1070x (\mtxname{rgg-21}), 55x (\mtxname{rgg-19}), 5x (\mtxname{rgg-21}), and 816x (\mtxname{rgg-20}) for \bimatching, \domset, \vcover, and \densest graph LPs, respectively, before crossover occurs. When
comparing the time when after crossover or one of the simplex methods from \gurobi terminates (whichever comes first), the relative speedups are
1462x (\mtxname{rgg-21}), 3510x (\mtxname{rgg-19}), 10x (\mtxname{usroads}), and 878x (\mtxname{rgg-20}) for \bimatching, \domset, \vcover, and \densest graph LPs, respectively.
In addition to significant speedups, we also observe that MWU is capable of running much larger problems. For instance, both \cplex and \gurobi can only solve \mtxname{kron-21} for \domset but not for \densest, the latter which contains three times
more nonzeros than the former. Moreover, both
LP solvers fail to solve any of the problems with the largest graphs, \mtxname{hollywood} and \mtxname{orkut}. 



\noindent\textbf{Comparison with Custom Implementations.}
\mina{ We compare \mwuomp performance to \graft for \bimatching and to \gbbs for
\densest. The MS-BFS algorithm returns an exact solution. \mwuomp returns an
approximate solution with $\epsilon=0.1$.} The MS-BFS algorithm
~\cite{azad2016computing} initializes with a serial Karp-Singer greedy step and
finds augmenting paths in parallel using specialized BFS. The performance of
\graft heavily depends on the graph structure.
In general, we observed the \mwuomp can outperform \graft for graphs with planar
structures by 1.8-22.4x for the \mtxname{rgg} graphs and \mtxname{usroads}. On
the other hand, for graphs which contain a strong community-structure
or vertices with high degrees, \graft outperforms MWU. For example, amongst the
\mtxname{kron} graphs, \graft can be up to 450x faster than \mwuomp. For these
types of graph instances, \bimatching generally spends much less time on the
grafting process than with planar-structured graphs, which we often find to be
the dominating cost of \bimatching. 

For \densest, \gbbs implements Charikar's greedy 2-approximation algorithm for densest subgraph~\cite{charikar2000greedy}, but the relative error is usually much better in practice (but worse than $\epsilon=0.1$). \mina {Again, we run \mwuomp with $\epsilon=0.1$.} We observe that \gbbs always outperforms MWU, achieving a maximum speedup of 63.2x and minimum speedup of 4x.





\noindent\textbf{Comparison with Previous Work.}
\mina{ We compare MWU (Algorithm~\ref{alg:exp2}) and an implementation of a gradient descent algorithm with adaptive error~\cite{makari2013distributed}, which uses a less theoretically efficient multiplicative weights update algorithm but is the only other distributed study of multiplicative weights update methods on graph problems. Their paper compared their algorithm, called MPCSolver, against their implementation of Young's parallel algorithm for feasibility generalized matching, which is a mixed packing and covering LP, for an $(1+\epsilon)$-relative solution ($\epsilon=0.05$) and found that the implementation of MPCSolver outperformed Young's algorithm}. We provide a detailed description of the problem, datasets, and gradient descent algorithm in Appendix~\ref{sec:more_details_on_gm}.



\mina{ We run the same experiment as them with \mwuomp, using the same datasets as well: the Netflix and KDD datasets~\cite{bennett2007netflix,dror2012yahoo}. Because both algorithms solve the same LP with a multiplicative weight update approach, there are only minor differences in vector operations between the two algorithms. Consequently, for this section, we compare iteration counts rather than time between the two algorithms.}
The two proposed algorithms are compared in Figure~\ref{fig:MPC_compare}. For MWU, we consider both the standard step size and the Newton's method
\mina{for step size search.} The gradient descent data is manually extracted from~\cite{makari2013distributed} using WebPlotDigiter~\cite{Rohatgi2020}. 
The plot shows both MWU with Newton's method and gradient descent with adaptive error find a $(1+\epsilon)$-relative solution in less than 2000 iterations, whereas MWU with standard step size converges much more slowly. 
Moreover, MWU with Newton's method incurs $10 \times$ and $41 \times$ fewer iterations than gradient descent for Netflix and KDD, respectively. 

These results highlight the effectiveness of using a step size strategy, such as Newton's method, over the standard step size. Furthermore, MWU with Newton's method converges more rapidly than gradient descent with an adaptive error. However, since both methods use heuristics to accelerate the method, testing additional positive LPs and datasets would be needed for a comprehensive understanding of the trade-offs between MWU and gradient descent. \mina{ The heuristic that MPCSolver uses prematurely stops the algorithm once it detects that the per-iteration decrease in constraint violation falls below a threshold.} 


\begin{figure}[h]
    \centering
    \includegraphics[scale=0.5]{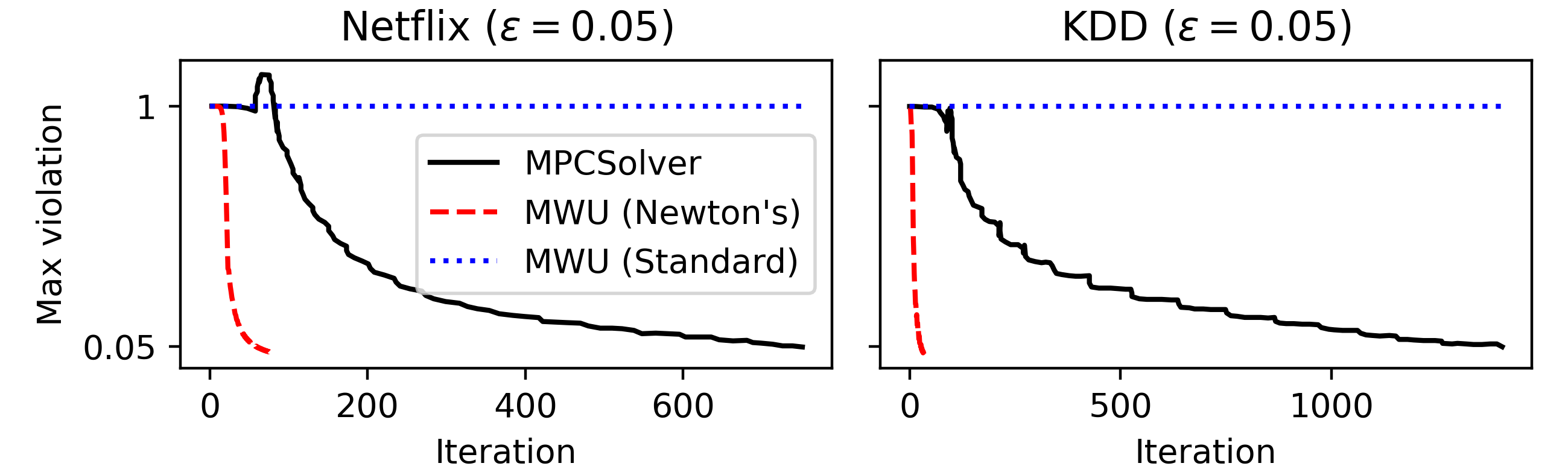}
    \caption{Max violation, defined as $\max\{0,\max(\B{Px})-1,1-\min(\B{Cx})\}$ for MPCSolver, which is a gradient descent algorithm with adaptive error~\cite{makari2013distributed}, and MWU (Algorithm~\ref{alg:exp2}) with standard step size and Newton's method.}
    \vspace*{-2ex}
    \label{fig:MPC_compare}
\end{figure}


\subsection{Parallel Scalability of MWU \label{sec:MWUScale} }

We now analyze the strong scaling behavior of \mwupetsc and \mwuomp.
%
Figure~\ref{fig:scalability} displays the speedup with respect to single-threaded execution of the \mwuomp implementation. 

When executing on a single node, all LP problem types are run along a range of thread counts from single threaded to 68 threads, which is the maximum number of hardware threads on one machine. Here, \mwuomp is able to achieve speedup over 16x with 68 threads in 90\% of experiments and over 32x in 50\% of experiments. Overall, the \mwuomp implementation achieves speedups of 13-55x on 68 threads compared to its single threaded run times. The largest differences between \mwuomp and \mwupetsc are observed on graph applications where we use specialized matrices such as \vcover, \bimatching, and \densest problems. High parallel efficiency in \mwuomp is achieved due to load balancing and high locality in matrix-vector multiplications of transposed specialized matrices and vector operations.
On the other hand, we see that the \mwuomp can only achieve 2-3x speedup compared to \mwupetsc for dominating set LP (\domset) application. Note that, for \domset, \mwuomp can only make use of format selection and memory access minimization optimizations for SpMV operations.

\mina{For multi-node results, we execute \mwuomp and \mwupetsc with 64 MPI processes per node and 1 thread per process (this was the most performant configuration for \mwupetsc on a single machine). We only run experiments where the total number of processes is square, as that is a requirement for our implicit representation.} We run all algorithms with Newton's optimization for step size search and limit Newton's methods to 5000 iterations. \mina{On distributed memory, except for vertex cover on \mtxname{rgg-24}, \mwuomp runs faster than \mwupetsc at scale.
For the distributed problem, we also observe almost linear scaling for all graphs except \mtxname{rgg-24} in \mwuhybrid. A matrix-vector product on the incidence matrix of banded matrices, like \mtxname{rgg-24}, reduces communication on a 1D data distribution pattern, so a 1D parallelization (e.g., row-wise distribution of the matrix) is more efficient than a 2D distribution, which is the layout we use.}


\mina{For \mwupetsc, we observe good scaling on \domset, which we expect as the LP for this problem does not use implicit representation and is a 1D problem. We observe poor scaling in all other problems on all graphs except \mtxname{rgg-24}. For \mtxname{rgg-24}, \mwupetsc achieves good performance due to its internal representation \cite{balay2019petsc}, which, we believe communicates only the vector entries needed by each processor based on the sparsity pattern of rows assigned to it. However, \mwupetsc performs extremely poorly on \densest when we use multiple processors and does not complete in under 2 hours.}
In conclusion, for general graphs, the implicit 2D representation scales well compared to a explicit 1D representation.

\begin{figure*}[h]
    \centering
    \captionsetup[subfloat]{farskip=0pt,captionskip=0pt}
    \subfloat[\domset \mtxname{orkut}]{\includegraphics[width=0.19\textwidth]{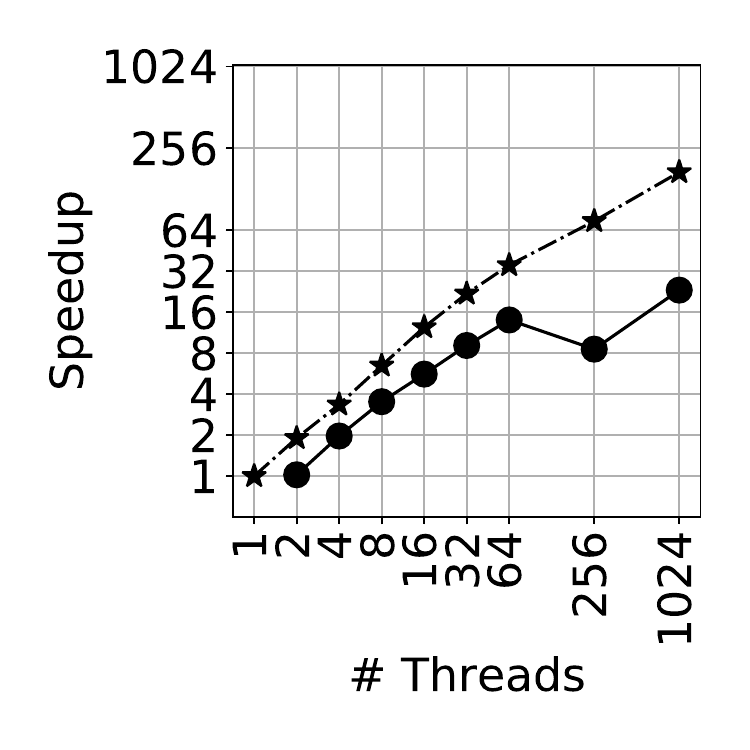}\label{fig:domset_orkut}}
    ~
    \subfloat[\domset \mtxname{hollyw.}]{\includegraphics[width=0.19\textwidth]{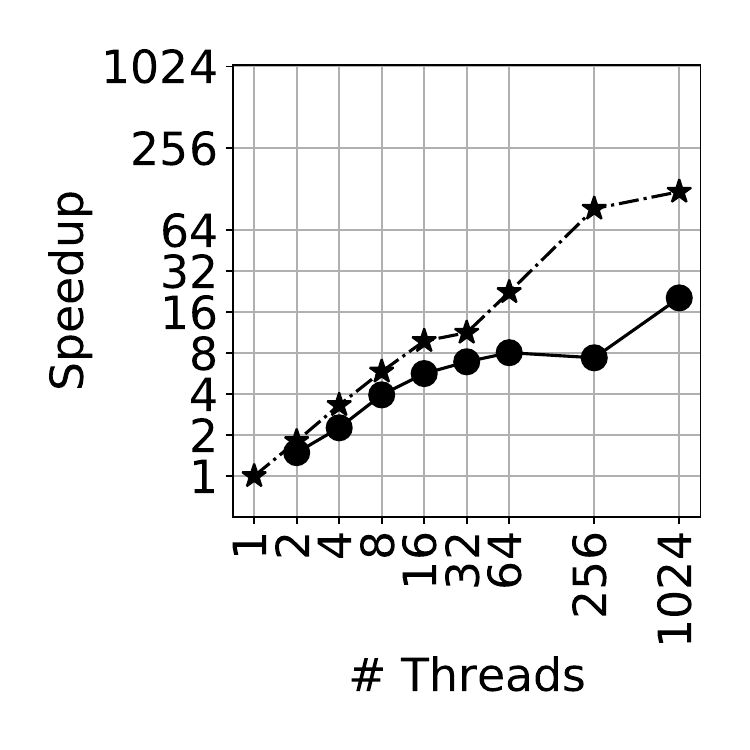}\label{fig:domset_hollywood}}
    ~
    \subfloat[\matching \mtxname{hollyw.}]{\includegraphics[width=0.19\textwidth]{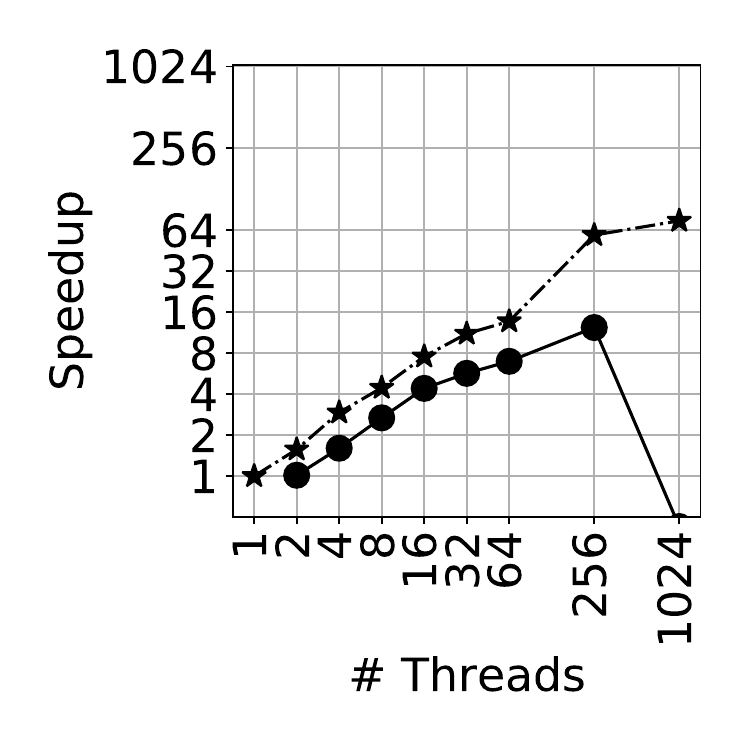}\label{fig:matching_hollywood}}
    ~
    \subfloat[\matching \mtxname{kron-21}]{\includegraphics[width=0.19\textwidth]{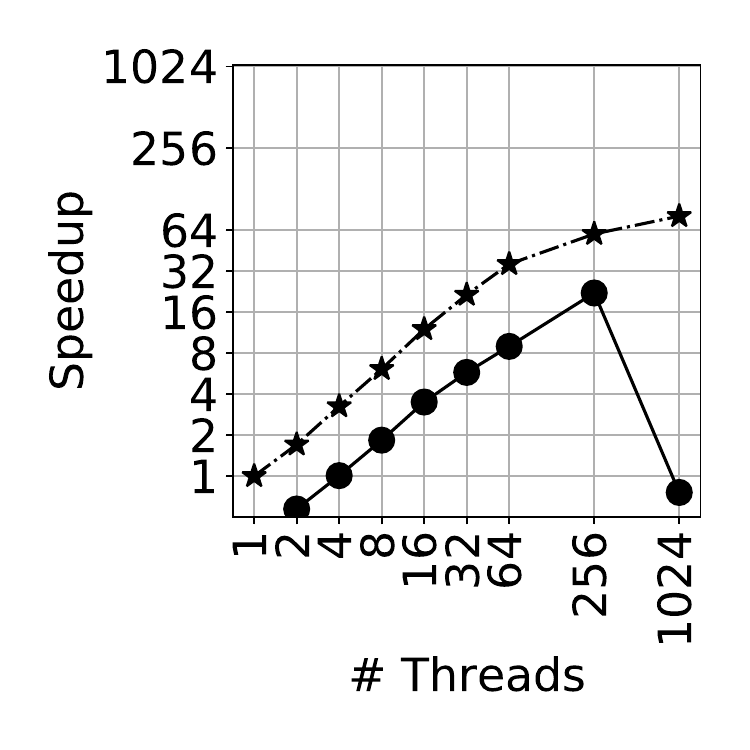}\label{fig:matching_kron}}
    ~
    \subfloat[\bimatching \mtxname{hollyw.}]{\includegraphics[width=0.19\textwidth]{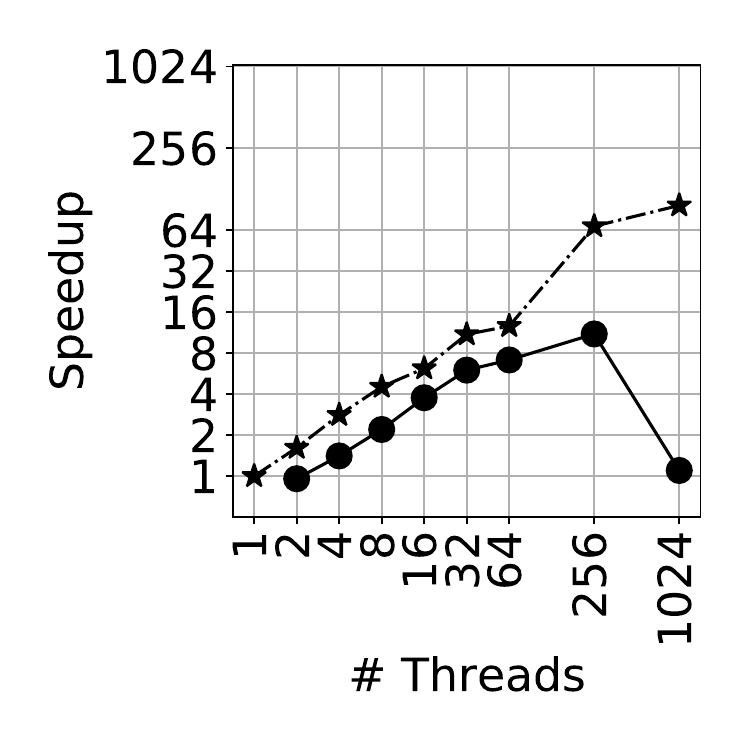}\label{fig:bimatching_hollywood}}
    \\
    \subfloat[\bimatching \mtxname{kron-21}]{\includegraphics[width=0.19\textwidth]{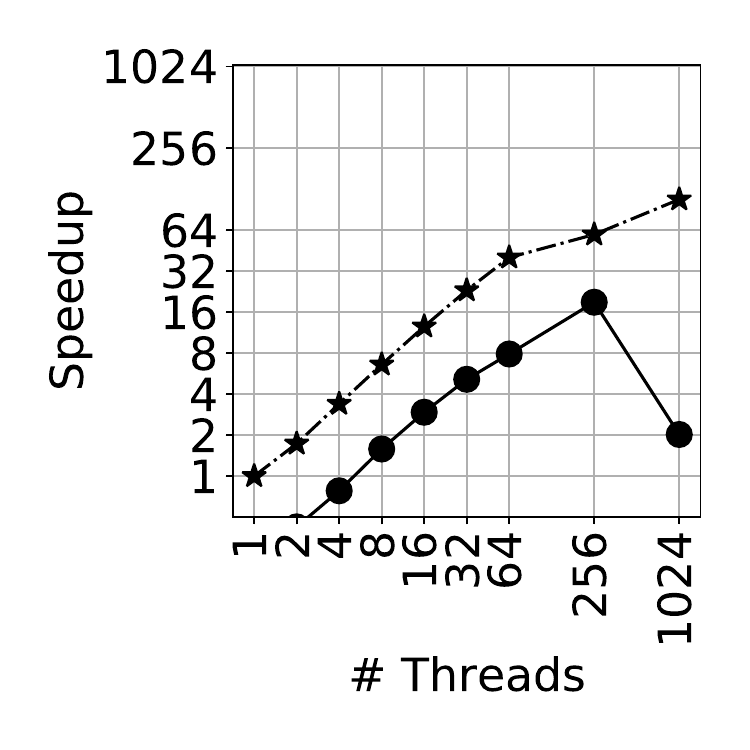}\label{fig:bimatching_kron}}
    ~
    \subfloat[\vcover \mtxname{orkut}]{\includegraphics[width=0.19\textwidth]{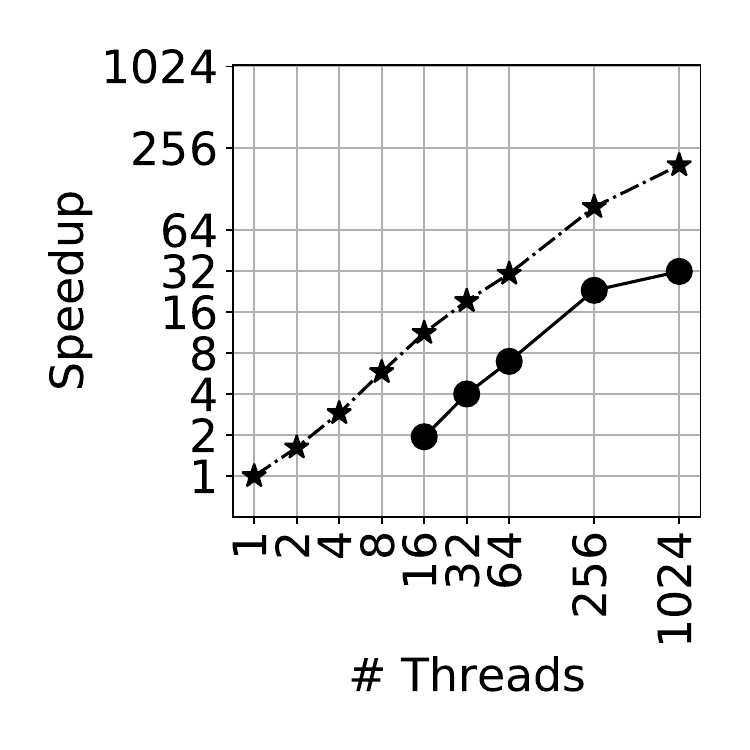}\label{fig:vtxcover_orkut}}
    ~
    \subfloat[\vcover \mtxname{rgg-24}]{\includegraphics[width=0.19\textwidth]{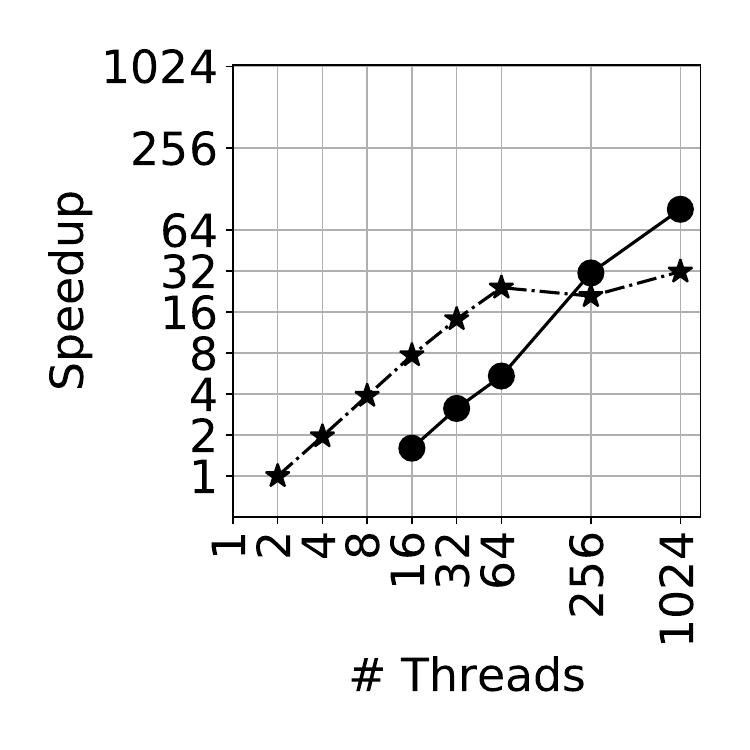}\label{fig:vtxcover_kron}}
    ~
     \subfloat[\densest \mtxname{orkut}]{\includegraphics[width=0.19\textwidth]{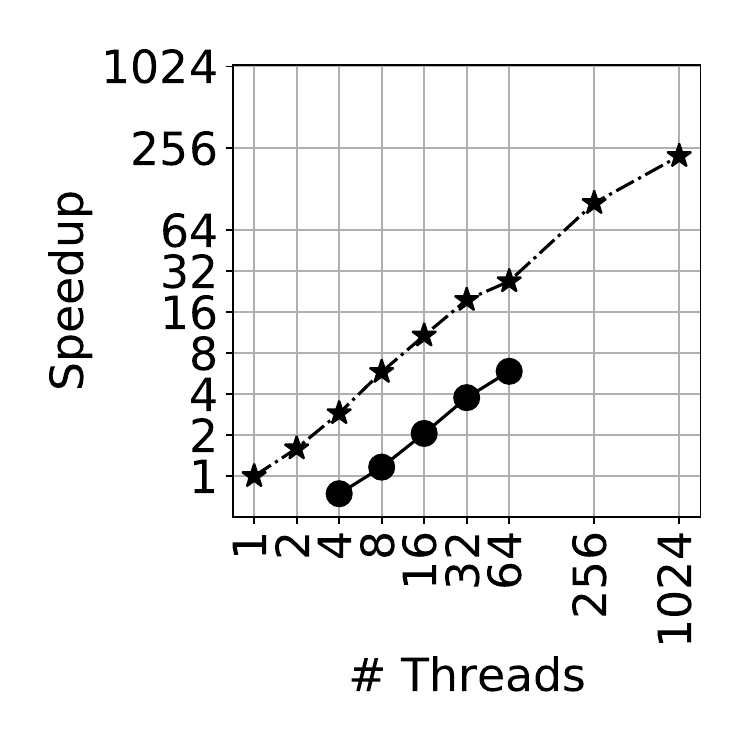}\label{fig:densest_orkut}}
    ~
    \subfloat[\densest rgg-24]{\includegraphics[width=0.19\textwidth]{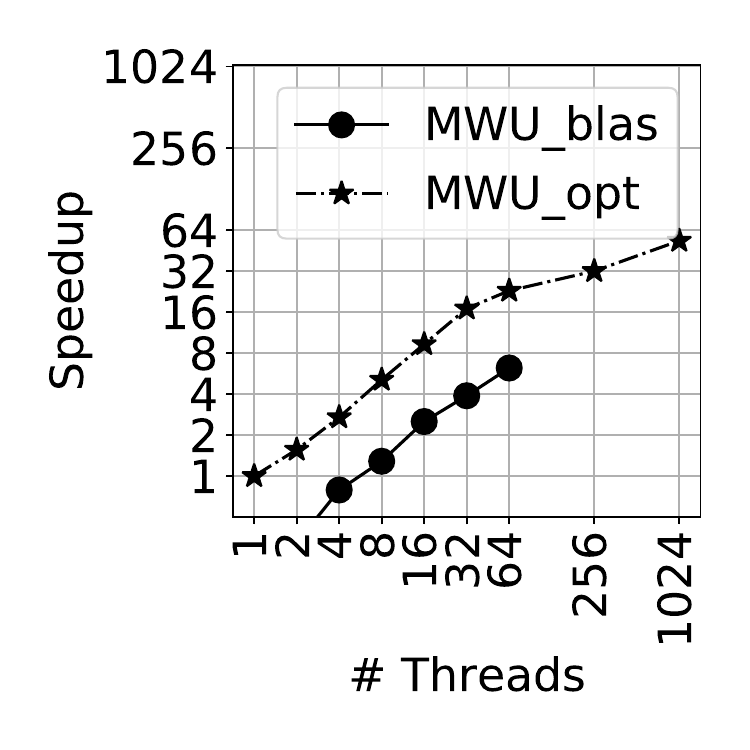}\label{fig:densest_rgg}}
    
    \caption{Scalability of \mwuomp and \mwupetsc (see subfigure (j) for the legend). Axes are in $log_2$ scale. All values are normalized to single-threaded execution of \mwuomp. We omit the results for \mwupetsc, if the execution time is slower than single-threaded execution of \mwuomp.}
    \label{fig:scalability}
\end{figure*}

\subsection{Improvements from Step Size Strategy} \label{sec:sec:step_strat_exp}
We first evaluate the effectiveness of step size search (Section~\ref{sec:AlgImprove}).
We run \mwupetsc using 64 MPI proccesses, each with 1 thread
and list the results for \mtxname{rgg-18} in Table~\ref{tab:NoStrategyTable}. We choose
this graph since we have exact solutions for all five graph problems, and the run time with standard step size 
is not too large. 
The speedups for other graphs are within an order of magnitude of the ones listed here.

\begin{table}[h]
\caption{Convergence of MWU to find $(1+\epsilon)$-relative ($\epsilon=0.1$) solution on 
\mtxname{rgg-18} with standard step strategy (\textbf{Std}),
binary search (\textbf{Bin}), and Newton's method
(\textbf{Nwt}). For the latter two step strategies, we use the
previous step size as the initial step size.}
\footnotesize 
\centering
\begin{tabular}{|l||l|l|l|l|l|l|l|l|}
\hline
           & \multicolumn{3}{c|}{\textbf{\# MWU iters}} & \multicolumn{2}{c|}{\begin{tabular}[x]{@{}l@{}}\textbf{Avg \# step}\\\textbf{size iters}\end{tabular}}  & \multicolumn{3}{c|}{\textbf{Time (sec)}}   \\ \hline
           & \textbf{Std} & \textbf{Bin} & \textbf{Nwt} & \textbf{Bin} & \textbf{Nwt} & \textbf{Std} & \textbf{Bin} & \textbf{Nwt} \\ \hline \hline
\matching   & 25477         & 13           & 13            & 8.31   &   4.86  & 79.3          & 0.87        & 0.94         \\ \hline
\bimatching & 28210         & 15           & 13            & 8.00 &  5.07  & 261           & 1.08         & 1.05          \\ \hline
\domset     & 18837         & 96           & 166           & 5.78  &  2.58   & 41.3          & 1.30         & 1.58          \\ \hline
\vcover     & 30531         & 76           & 110           & 5.93  &   2.68  &   106           & 1.99         & 2.19          \\ \hline
\densest    & 20021         & 21           & 18            & 8.00   &   4.79   &    170           & 0.60         & 0.47          \\ \hline
\end{tabular}
\label{tab:NoStrategyTable}
\end{table}

The results verify that a step size search strategy significantly reduces the number of MWU iterations compared to 
the standard step size prescribed in theoretical algorithms. 
Since an MWU iteration tends to be more expensive than a search step iteration (due to the SpMV), these results suggest that finding accurate step sizes, at the expense of a higher search cost, reduces the overall run time.
The performance difference between binary search or Newton's method is relatively small. While Newton's method on average requires fewer step size search iterations than binary search, it has more MWU iterations than binary search for the two pure covering problems, \domset and \vcover.
The additional MWU iterations observed when using Newton's method may be attributed to the $(1-\epsilon)$ multiplicative decrease (where $\epsilon=0.1$) applied to step sizes violating the bang-for-buck inequality~\eqref{eq:b4b_conv}.

\subsection{Effect of Software Optimizations}
We now analyze the acceleration of an MWU iteration with our software optimizations. To do so, we will compare the execution times of \mwupetsc and \mwuomp implementations with 68 threads. Later, we will also compare the execution times of \mwupetsc and \mwuhybrid implementations for problems with implicit matrix vector multiplication.

\subsubsection{Performance Breakdown}

First, we consider where the cycles are spent in our \mwupetsc implementation. Figure~\ref{fig:breakdown_petsc} shows the fraction of time spent in matrix-vector products (\emph{matvec}), step size search (\emph{search}), and other vector operations (\emph{vec}).
The gradients (Line~\ref{line:smax},~\ref{line:smin}) and new direction (Line~\ref{line:direction}) are included in the \emph{vec} category while all other vector operations done during step size search are included in the search category.

We observe that both applications and input graphs affect the distribution of execution cycles among these three components. For example, while \matching, \bimatching and \domset problems spend most of their time during \emph{matvec}, \vcover and \densest problems spend more than 50\% of their execution time for \emph{vec} and \emph{search} operations. \emph{matvec} takes on average 75\%, 78\%, and 82\% of the execution time for \matching, \bimatching, and \domset problems, respectively. In contrast, for \vcover and \densest, \emph{matvec} takes only 45\% and 38\% of the execution time on average.
Due to this variable behavior, it is crucial to optimize both matrix-vector multiplications and vector operations for MWU.
\begin{figure}[h]
\vspace*{-2ex}
     \centering
    \captionsetup[subfloat]{farskip=0pt,captionskip=0pt}
    \subfloat[\mwupetsc execution time breakdown]{\includegraphics[width=0.45\textwidth]{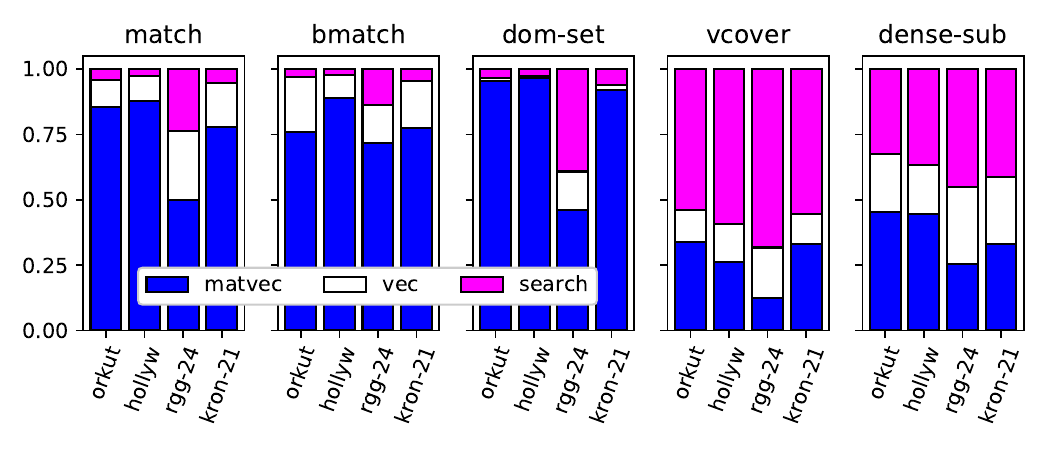}\label{fig:breakdown_petsc}}
   \\
    \subfloat[OMP Breakdown]{\includegraphics[width=0.45\textwidth]{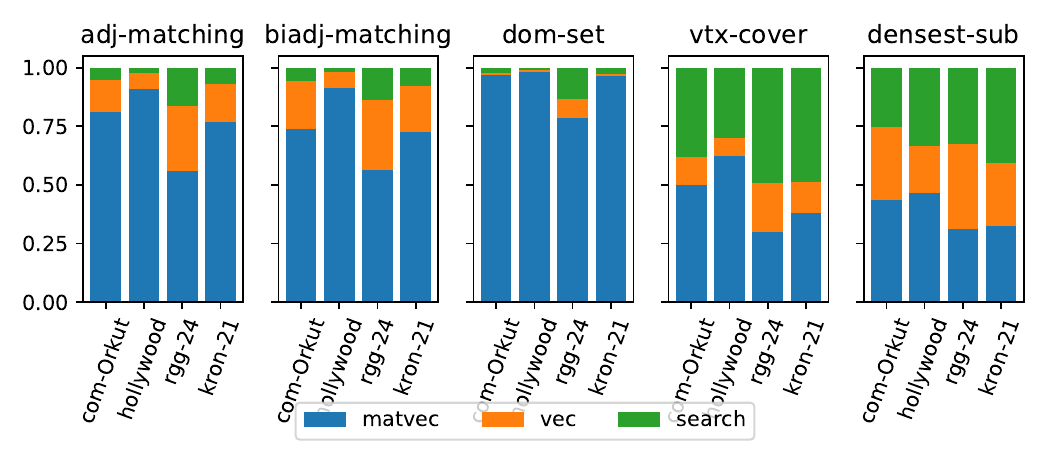}}
  \\

    \captionsetup[subfloat]{farskip=0pt,captionskip=0pt}
    \subfloat[Speedups of MWU components with \mwuomp compared to \mwupetsc]{\includegraphics[width=0.45\textwidth]{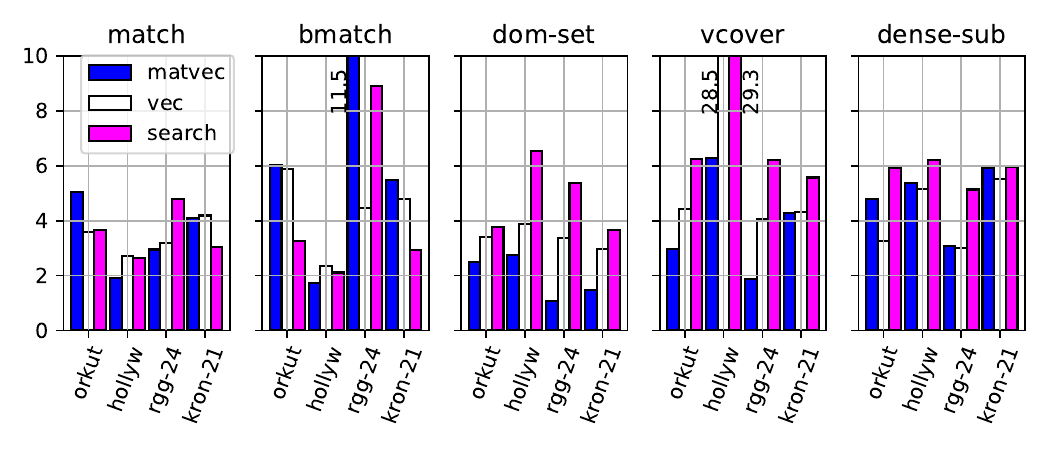}\label{fig:opt_speedup}}
    \caption{Breakdown of execution times and speedups obtained with \mwuomp for different components.}
    \label{fig:breakdown}
    \vspace*{-3ex}
\end{figure}


\subsubsection{Shared-Memory Performance Optimizations}

Figure~\ref{fig:opt_speedup} shows the speedup obtained by our optimized
implementation relative to the PETSc-based implementation when executing on a
single node.
In this section, we report geometric mean speedups when referring to
average speedup across graphs. For \domset, the speedup is obtained from using a
favorable format (CSB) and minimizing memory accesses. Our optimizations
accelerate \emph{matvec} operations by 1.8x on average. Although \emph{vec} and
\emph{search} operations also get speedups, their contribution to overall
performance is smaller. On the other hand, for \matching, \bimatching, and
\vcover problems, we can observe the benefit of specialized vertex-incidence
matrix vector multiplications. In these cases, \emph{matvec} operations are
3.28x, 5.06x, and 3.49x faster on average, respectively.
For  \densest problem, we also see benefits of vertex-edge pair matrix and interweaved-identity matrix specializations. \emph{matvec} operations are 4.64x faster on average. 

Moreover, \vcover and \densest problems spend a large amount of time for \emph{vec} and \emph{search} operations. We see that, in these cases, \mwuomp implementation can obtain significant speedups for both \emph{vec} (6.85x, and 4.08x  on average, respectively) and \emph{search} (8.92x and 5.79x on average, respectively) thanks to fusing and SIMD optimizations. 



\subsubsection{Distributed-Memory Optimizations}

We record run time improvements in the context of multi-node exeuction \mwuhybrid over \mwupetsc in Table~\ref{tab:impmatvec_dist_speedup}. In parenthesis is the ratio of \emph{matvec} product time to \emph{matvec} communication time for \mwuhybrid. All experiments are run with 64 OMP threads per MPI process. We observe that for 4 nodes, the use of implicit matrix-vector products accelerates \emph{matvec} operations in \mwuhybrid by 1.4-3x compared to \mwupetsc for all graphs except \mtxname{rgg-24}, for which it is slower by 9x. \mina{As previously discussed in Section \ref{sec:MWUScale}, \mwupetsc uses a 1D communication layout, which is more efficient on banded matrices like \mtxname{rgg-24} than our 2D communication layout.}


\begin{table}[h]
\footnotesize
\caption{Speed-up in run time of our implicit implementation of the product of the edge-incidence matrix and a vector. The ratio of computation to communication time in \mwuhybrid is parenthesized.}
\centering
 \begin{tabular}{||c | c c c c ||} 
 \hline
 & \mtxname{hollyw} & \mtxname{orkut} & \mtxname{rgg-24} & \mtxname{kron-21} \\ [0.5ex] 
 \hline\hline
 4 nodes & 1.4 (0.57) & 3 (1.1) &  0.11 (0.09) & 3 (0.83) \\
 \hline
 16 nodes & 120 (0.39) & 46 (0.19) &  0.04 (0.02) & 97 (0.18) \\
  \hline
 64 nodes & 186 (0.3) & 259 (0.12) &  0.09 (0.01) & 516 (0.19) \\
 \hline
\end{tabular}
\label{tab:impmatvec_dist_speedup}
\end{table}

\section{Conclusion}


Our work 
demonstrates that approximate positive LP solvers are an efficient and scalable approach for solving a wide range of graph problems.
We show that with carefully chosen modifications and implementation of the MWU algorithm from Mahoney et. al.~\cite{mahoney2016approximating} -- namely, a step size search strategy and specialized linear algebra operations that leverage shared and distributed-memory resources -- the algorithm exceeds the performance of general purpose LP solvers for finding a $(1+\epsilon)$-relative solution. 
Our implementation also matches the performance of hand-tuned parallel graph libraries for some graphs. 

\begin{acks}
    This research has been supported by funding from the United States National
    Science Foundation (NSF) via grant \#1942995 and \#2016136, as well as NSF
    grant CCF-1910149. This material is based upon work supported by the U.S.
    Department of Energy, Office of Science, Office of Advanced Scientific
    Computing Research, Department of Energy Computational Science Graduate
    Fellowship under Award Number DE-SC0022158. 
\end{acks}


\bibliographystyle{ACM-Reference-Format}
\bibliography{sample-base}

\clearpage
\appendix

\section{Supplementary Material}
\subsection{Further Details on Generalized Matching Experiments}
\label{sec:more_details_on_gm}
Let $G=(V,E)$ be an undirected, unweighted graph. For generalized matching, a vertex $v$ can be matched $b(v)$ times, where $\mathrm{lb}(v) \leq b(v) \leq \mathrm{ub}(v)$ are lower and upper bounds on the number of unique vertices matching with $v$. More precisely, the IP formulation is
\begin{equation} \label{eq:gm_ip}
\begin{split}
    \exists \ \B x \text{ s.t. } & \ \mathrm{lb}(v) \leq \sum\limits_{e \in \mathrm{inc}(v)} x_e \leq \mathrm{ub}(v), \forall v \in V, \\
    & \ x_e \in \{0,1\}, \forall e \in E.
\end{split}
\end{equation}
Maximum matching is equivalent to generalized matching with $\mathrm{lb}(v)=0$ and $\mathrm{ub}(v)=1, \forall v \in V$, as well as a (maximization) objective function of $\sum\limits_e x_e$.
The LP relaxation is the feasibility mixed packing and covering LP,
\begin{equation*}
    \exists \ \B x \in \mathbf{R}^m \text{ s.t. } \B{Mx} \geq \B{l}, \B{Mx} \leq \B{u}, \B x \geq \mathbb{0}, 
\end{equation*}
where $\B M$ is the vertex-edge incidence matrix, and $\B l, \B u \in \mathbf{R}^n$ are the vector of lower and upper bounds for each vertex.

\subsection{Dataset Preprocessing}

Now, let us describe how to pre-process the Netflix~\cite{bennett2007netflix} and KDD~\cite{dror2012yahoo} datasets, as detailed in~\cite{makari2013distributed}. Both datasets contain users and items (e.g., movies in Netflix, music tracks in KDD) as vertices, and edges correspond to a user rating an item. This dataset is represented as a bipartite graph, where users and items form the two partitions, and edges go only between vertices in separate partitions. For the number of matchings with each user, we enforce a lower bound of three and upper bound of five. For items, no lower bound is set, but an upper bound of 200 and 2000 is chosen for Netflix and KDD, respectively. Finally, to ensure there is a feasible matching satisfying these bounds, we exclude users with less than ten ratings from the Netflix dataset. After this pre-processing step, the two datasets have 473k and 1.6m vertices, as well as 100m and 252m edges, respectively. 

\subsection{Gradient Descent with Adaptive Error}

Finally, we review the gradient descent algorithm with an adaptive error of~\cite{makari2013distributed}. In short, the algorithm minimizes the convex function via gradient descent,
\[
    \Gamma(\B x) = \sum\limits_{i=1}^{m_P} y_i(\B x) + \sum\limits_{i=1}^{m_C} z_i(\B x),
\]
where for some $\mu > 0$, 
\begin{align*}
    y_i(\B x) &= \mathrm{exp}\left [\mu \cdot (\B{P}_i\B{x}-1) \right ] \\
    z_i(\B x) &= \mathrm{exp}\left [\mu \cdot (1-\B{C}_i\B{x}) \right ] .
\end{align*}
The algorithm contains two error values. There is the error bound $\epsilon$, where the algorithm seeks to find an $\B x$ that is a $(1+\epsilon)$-relative solution. Then there is the \emph{internal} error bound $\epsilon'$, which is used to specify $\mu$ as well as which coordinates of $x_i$ to update, and by how much. The authors of~\cite{makari2013distributed} found that they can set $\epsilon' > \epsilon$. For example, when $\epsilon=0.05$, they can choose $\epsilon'=1$. Then they run the algorithm until it stagnates, and if $\B x$ is not an $(1+\epsilon)$-relative solution, they decrement $\epsilon'$ and warm-start the algorithm by setting the initial $\B{x}_0$ to the solution of the previous, stagnated algorithm. This strategy is called \emph{adaptive error}, since it adaptively updates the internal error bound. 

\end{document}